\newtheorem{property}{Property}
\newtheorem{proposition}{Proposition}
\newcommand{\ie}{\textit{i.e.}\ }
\title{Improving Sparse IMU-based Motion Capture with Motion Label Smoothing}
\author{
    Zhaorui Meng\textsuperscript{\rm 1},
    Lu Yin\textsuperscript{\rm 1},
    Yangqing Hou\textsuperscript{\rm 1},
    Anjun Chen\textsuperscript{\rm 1}\thanks{Corresponding author.},
    Shihui Guo\textsuperscript{\rm 1},
    Yipeng Qin\textsuperscript{\rm 2}
}
\begin{document}

\maketitle

\begin{abstract}
Sparse Inertial Measurement Units (IMUs) based human motion capture has gained significant momentum, driven by the adaptation of fundamental AI tools such as recurrent neural networks (RNNs) and transformers that are tailored for temporal and spatial modeling. Despite these achievements, current research predominantly focuses on pipeline and architectural designs, with comparatively little attention given to regularization methods, highlighting a critical gap in developing a comprehensive AI toolkit for this task.
To bridge this gap, we propose {\it motion label smoothing}, a novel method that adapts the classic label smoothing strategy from classification to the sparse IMU-based motion capture task. 
Specifically, we first demonstrate that a naive adaptation of label smoothing, including simply blending a uniform vector or a ``uniform'' motion representation (e.g., dataset-average motion or a canonical T-pose), is suboptimal; and argue that a proper adaptation requires increasing the {\it entropy} of the smoothed labels.
Second, we conduct a thorough analysis of human motion labels, identifying three critical properties: 1) Temporal Smoothness, 2) Joint Correlation, and 3) Low-Frequency Dominance, and show that conventional approaches to entropy enhancement (e.g., blending Gaussian noise) are ineffective as they disrupt these properties.
Finally, we propose the blend of a novel skeleton-based Perlin noise for motion label smoothing, designed to raise label entropy while satisfying motion properties. Extensive experiments applying our motion label smoothing to three state-of-the-art methods across four real-world IMU datasets demonstrate its effectiveness and robust generalization (plug-and-play) capability.

\end{abstract}

\section{Introduction}
Human motion capture plays a critical role in diverse domains, including film production~\cite{menache2000understanding}, interactive gaming~\cite{geng2003reuse}, and medical rehabilitation~\cite{mousavi2014review}. Recently, sparse Inertial Measurement Units (IMUs) based motion capture systems have emerged as a lightweight yet promising alternative.

These systems achieve real-time human motion reconstruction using only six IMUs strategically positioned on the 

\begin{figure}[H] 
  \centering
  \includegraphics[width=0.9\linewidth]{./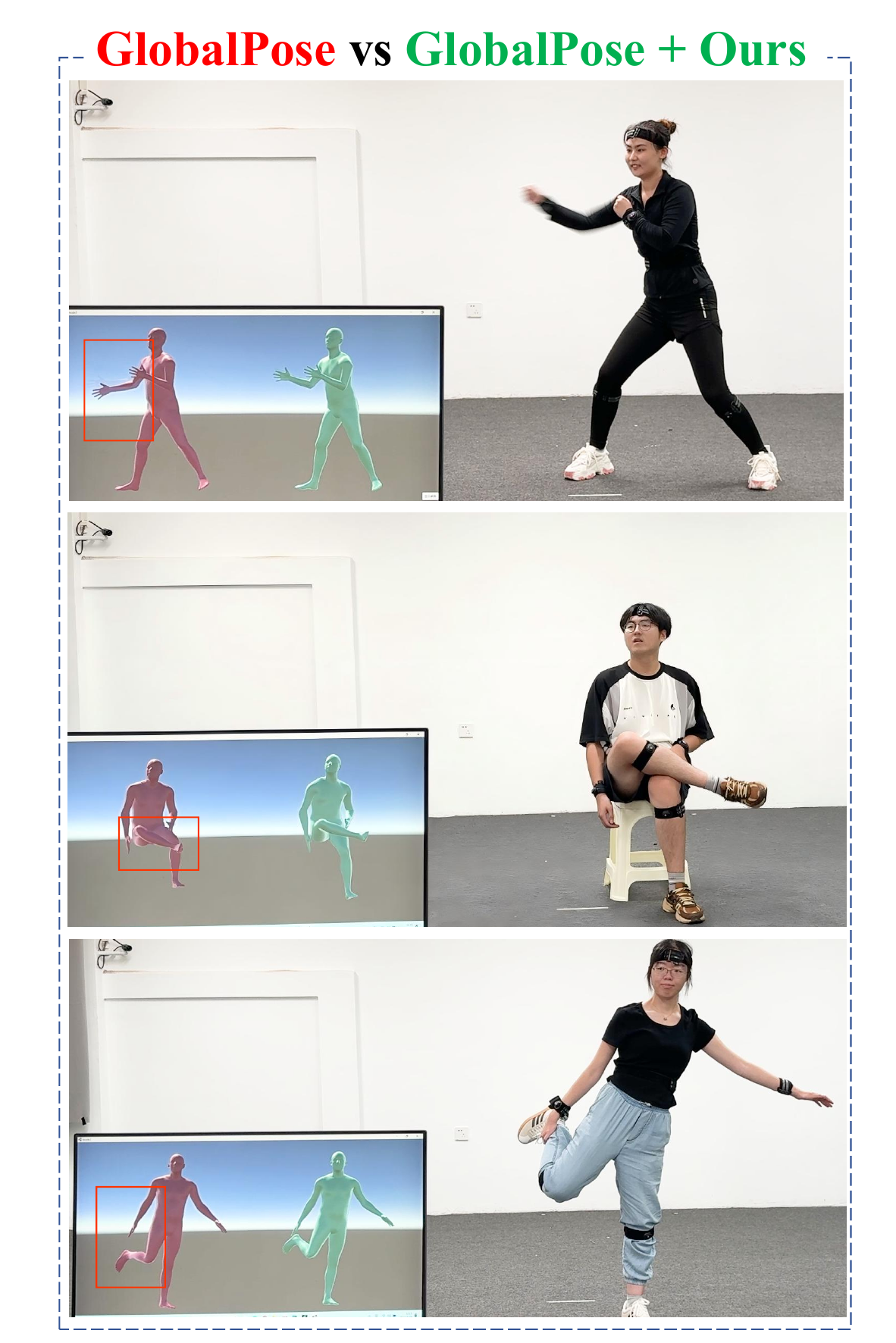}
  \caption{A live comparison between the state-of-the-art sparse IMU-based motion capture system, GlobalPose~\cite{yi2025improving} (left, {red}), and its improved variant enhanced with our motion label smoothing technique (right, {green}) clearly illustrates the effectiveness of our method.}

  \label{fig:teaser}
\end{figure}

\noindent wrists, ankles, head, and hips. This minimal configuration offers compelling advantages in portability, affordability, and resilience to occlusions or lighting variations, making them highly suitable for ubiquitous motion capture scenarios.

Fueled by recent advances in AI, sparse IMU-based motion capture has made remarkable progress. Early methods~\cite{huang2018deep, yi2021transpose} leveraged RNNs to reconstruct human motion; TIP~\cite{jiang2022transformer} introduced transformer architectures to improve accuracy; PIP~\cite{yi2022physical} enhanced RNNs with hidden-state initialization to disambiguate complex motions; PNP~\cite{yi2024pnp} calibrated acceleration signals via an autoregressive MLP. While effective, these approaches largely focus on adapting core neural architectures for temporal and spatial modeling to the task. In contrast, regularization, an equally critical component of deep learning, remains largely unexplored, revealing a key gap in building a more comprehensive AI toolkit for sparse IMU-based motion capture.

In this paper, we address this gap by introducing {\it motion label smoothing}, an adaptation of the classic label smoothing technique~\cite{szegedy2016rethinking} tailored for sparse IMU-based motion capture.
While it may appear straightforward, this adaptation poses significant challenges.
Specifically, a naive adaptation of label smoothing from classification tasks~\cite{szegedy2016rethinking, muller2019does}, such as blending the ground-truth label with a uniform label vector or a ``uniform'' motion representation (e.g., dataset-average motion or a canonical T-pose), proves suboptimal. We argue that this stems from a fundamental misinterpretation of ``smoothness'' in label smoothing: it is meant to increase label entropy, not merely enforce uniformity across label vectors. In classification, incorporating a uniform vector supports this objective; in motion capture, however, a ``uniform'' motion collapses into a static pose (e.g., a T-pose), paradoxically reducing entropy rather than enhancing it.
Therefore, to properly adapt label smoothing for sparse IMU-based motion capture, we first conduct a rigorous analysis of motion labels, identifying their three key properties: (1) Temporal Smoothness: motion evolves continuously over time; (2) Joint Correlation: rotations of adjacent joints within a kinematic chain are inherently linked; and (3) Low-Frequency Dominance: joint rotation signals in Euclidean space are dominated by low-frequency components.
Building on these properties, we demonstrate that naive entropy-enhancement strategies, such as blending Gaussian or uniform noise, are ineffective as they inevitably disrupt these intrinsic characteristics.
To address this challenge, we propose a novel skeleton-based Perlin noise method for motion label smoothing. 
Specifically, we first map joint rotations from the SO(3) manifold to a tractable Euclidean R6D representation~\cite{zhou2019continuity}, where these motion properties can be explicitly modeled. In this space, we construct a structured Perlin noise field whose spatial distribution encodes joint correlations (via kinematic chains) and whose temporal continuity preserves smoothness while still raising label entropy. Overlaying this skeleton-based noise onto motion labels yields smoothed labels that adhere to the principles of label smoothing while respecting the properties of human motion. We conduct comparison experiments across three state-of-the-art sparse IMU-based motion capture models and four real-world IMU datasets. 
Empirically, we also compare our method to naive adaptations of label smoothing and other label modification strategies.
Extensive experimental results demonstrate that our method consistently improves motion capture performance and outperforms competing strategies.

In summary, our contributions are:
\begin{itemize}
    \item We propose {\it motion label smoothing}, a novel adaptation of the classic label smoothing technique specifically tailored for sparse IMU-based motion capture, featuring the blending of a skeleton-based Perlin noise to motion data.
    \item To justify its necessity, we identify why a naive adaptation of label smoothing from classification is suboptimal, showing that they misinterpret ``smoothness'' as uniformity rather than entropy enhancement, leading to the misuse of static, low-entropy motions instead of meaningful regularization.
    \item In addition, we conduct the first rigorous analysis of motion labels for sparse IMU-based capture, identifying their three key properties: (1) Temporal Smoothness, (2) Joint Correlation, and (3) Low-frequency Dominance; and demonstrate that naive entropy-enhancement strategies (e.g., Gaussian or uniform noise) are also ineffective as they inevitably disrupt these properties.
    \item Extensive experiments on three state-of-the-art sparse IMU-based motion capture models and four real-world datasets demonstrate both the effectiveness and strong generalization capability of our method.
\end{itemize}
\section{Related Work}

\subsection{Sparse IMU-based Motion Capture}
Sparse IMU-based motion capture reconstructs human poses by estimating the local rotations of the 24 SMPL~\cite{loper2023smpl} joints through inverse kinematics, using signals from six IMUs placed on the left forearm, right forearm, left lower leg, right lower leg, head, and hips, respectively. 
However, this task remains highly challenging due to the sparsity of IMU data and its inherent noise. To address these challenges, prior works have primarily focused on adapting fundamental AI tools to enrich the toolkit for this task. Specifically, early solutions leveraged Recurrent Neural Networks (RNNs)~\cite{huang2018deep} to perform end-to-end pose estimation. Subsequent work by~\cite{yi2021transpose} broke the task into multiple prediction stages, enhancing pose accuracy and enabling global motion tracking with additional RNNs. Studies by~\cite{jiang2022transformer, wu2024accurate} explored the Transformer~\cite{vaswani2017attention} framework as an additional tool for inertial motion capture, while~\cite{yi2022physical} integrated physical optimization to improve the physical plausibility of predicted motions. Additionally,~\cite{yi2024pnp} mitigated the impact of non-inertial acceleration at the root joint during motion by adapting a self-supervised MLP network. Other efforts have targeted IMU drift and calibration errors, ~\cite{zuo2025transformer} introduced a real-time IMU calibration technique using a Transformer-based calibrator network, and~\cite{shao2025magshield} addressed magnetic interference in sparse IMU-based motion capture systems with an auxiliary LSTM to correct orientation errors leveraging human motion priors. Most recently, \cite{yi2025improving} achieved translation estimation in the full 3D space using physical optimization and refined pose estimation, making it state-of-the-art in sparse-IMU motion tracking. 

Nevertheless, while these methods have collectively advanced the AI toolkit for sparse IMU-based motion capture, they have predominantly focused on pipeline and architectural designs, with comparatively little attention given to {\it regularization} methods, revealing a critical gap in developing a comprehensive AI toolkit for this task.

\subsection{Label Smoothing}

Label smoothing \cite{szegedy2016rethinking} is a widely used regularization technique for improving the performance of deep learning models, with applications spanning diverse domains such as image classification, machine translation, and speech recognition \cite{chorowski2016towards, real2019regularized, huang2019gpipe, wei2021smooth, zhou2021rethinking, zhou2025training}. In practice, label smoothing operates in classification tasks by replacing one-hot labels with a softened target distribution formed by blending the ground truth label with a uniform label vector, which is commonly understood as a regularization technique designed to prevent models from becoming overly confident in their predictions and to improve their generalization capabilities \cite{pereyra2017regularizing, muller2019does}. Additionally, \cite{lukasik2020does} has highlighted its effectiveness in coping with label noise; \cite{yuan2020revisiting} further notes that knowledge distillation represents a form of learned label smoothing regularization, wherein label smoothing regularization serves as a virtual teacher model for knowledge distillation; \cite{zhang2021delving} proposes generating soft labels based on statistical predictions of the model for the target class, thereby implementing label smoothing. Furthermore, \cite{lienen2021label} introduces label relaxation, where the target is represented as a set of probabilities defined by an upper probability distribution. \cite{keriven2022not} explores smoothing in graph neural networks. 

Nevertheless, despite extensive research on label smoothing, its potential in sparse IMU-based motion capture remains unexplored.

\section{Preliminaries}

\subsubsection{Sparse IMU-based Motion Capture.} Our task involves predicting human motion based on real-time measurements from six IMUs attached to six body locations: 
\begin{equation}
    R = \texttt{Poser}(A_{S},O_{S},\omega_S^{root})
    \label{eq:sparse_imu_mocap}
\end{equation}
where \texttt{Poser} is the pose estimation network; acceleration ${A_{S}}\in \mathbb{R}^{3}$, orientation $ O_{S} \in \mathrm{SO}(3)$, and angular velocity of the root joint ${\omega_S^{root}}\in \mathbb{R}^{3} $ are the input measured by the six IMUs; the output $R$ comprises the rotations of the 24 joints of the SMPL \cite{loper2023smpl} human model (body pose). 

\subsubsection{Label Smoothing.} 
As defined in~\cite{szegedy2016rethinking}, given a ground-truth label \( y \), label smoothing modifies \( y \) into \( y' \), which comprises a mixture of a vector \( u \) and \( y \) weighted by \( 1 - \epsilon \) and \( \epsilon \), respectively:
\begin{equation}
\label{eq:label_smoothing}
    y' = (1 - \epsilon) y + \epsilon u
\end{equation}
where \( u \) is a uniform vector of value \( \frac{1}{K} \), where \( K \) represents the number of classes in a classification task.

\subsubsection{Motion Label Smoothing.} We adapt label smoothing (Eq.~\ref{eq:label_smoothing}) to our task (Eq.~\ref{eq:sparse_imu_mocap}) by replacing the $y$ in Eq.~\ref{eq:label_smoothing} with the ground truth rotation $R$ in Eq.~\ref{eq:sparse_imu_mocap}:
\begin{equation}
\label{eq:motion_label_smoothing}
    R' = (1 - \epsilon) R + \epsilon u
\end{equation}
Nevertheless, this adaptation is challenging due to the choice of $u$, which will be discussed in detail below.

\section{Analysis}

\begin{figure*}[t]
	\centering
 \includegraphics[width=1\linewidth]{./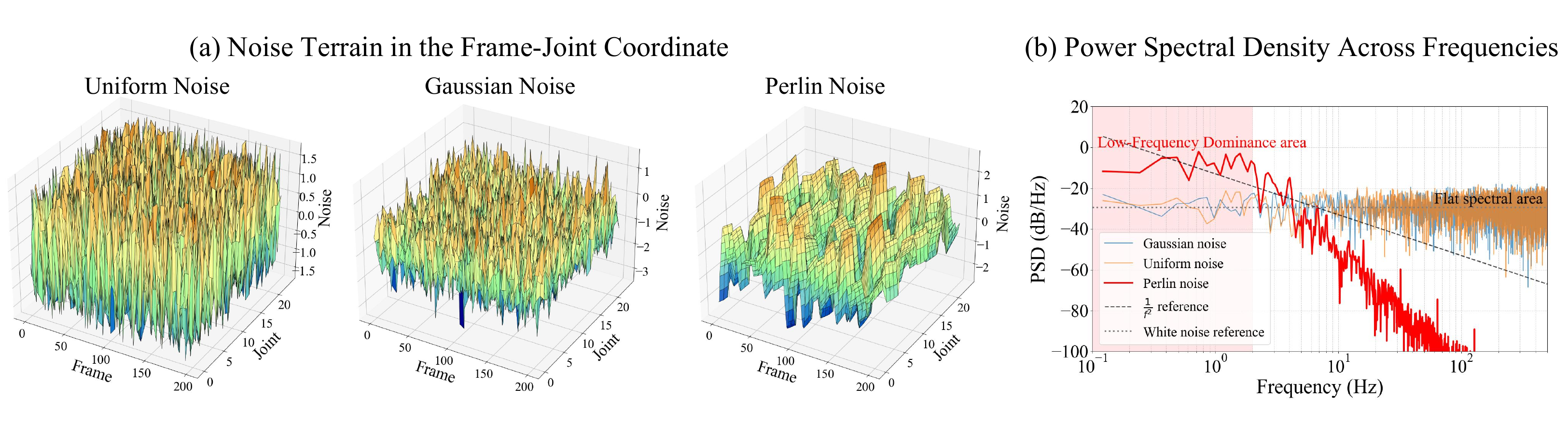}
	\caption{(a) Noise terrain of three types of noise in the frame-joint coordinate system, where only Perlin noise exhibits continuity across frames and correlation across joints. (b) Power spectral density (PSD) of the three types of noise, where only Perlin noise is dominated by low-frequency components.
}
	\label{fig:noise}
\end{figure*}

\subsection{Naive Adaptation of Label Smoothing is Ineffective}

Despite its effectiveness, the interpretation of label smoothing remains ambiguous, and we argue that the naive adaptation of it would undermine its potential as such an adaptation stems from a misinterpretation.

\subsubsection{Interpretation Ambiguity of Label Smoothing.} Despite its clear definition (Eq.~\ref{eq:label_smoothing}), label smoothing has two seemingly equally valid interpretations regarding the choice of $u$:
\begin{itemize}
    \item \textit{Uniform $u$ (naive interpretation).} As indicated by the use of the symbol ``$u$'', label smoothing works by increasing the {\it uniformity} of $y'$ using a high-uniformity $u$ which distributes label probability uniformly across all classes.
    \item \textit{High-entropy $u$ (our interpretation).} Unlike the naive interpretation, we argue that label smoothing works by increasing the entropy of $y'$ using a high-entropy $u$.
\end{itemize}
Thus far, it is difficult to assess the validity of these two interpretations, as they result in the same $u$ for classification tasks, i.e., a uniform vector of value \( \frac{1}{K} \) as mentioned above.

\subsubsection{Naive Adaptation is Ineffective.} 
\label{sec:naive}
Nonetheless, these two interpretations would lead to {\it different} $u$ when adapting label smoothing to sparse IMU‑based motion capture: 
\begin{itemize}
    \item \textit{Naive adaptation.} To ensure uniformity, \( u \) should represent an ``average'' motion label, e.g., the T-Pose or the mean motion derived from the training dataset.
    \item \textit{Our adaptation.} To increase entropy, \( u \) should be a noise vector tailored for motion representations.
\end{itemize}

Empirically, through extensive experiments (Table~\ref{tab:alter}) and accompanying discussions, we demonstrate that (1) the naive adaptation is ineffective, revealing that the naive interpretation is indeed a misinterpretation; (2) our interpretation is valid, offering new insights into the underlying mechanisms of label smoothing.

\subsection{Naive Entropy-Enhancement Strategies are Ineffective}

As mentioned above, our adaptation is non‑trivial as it requires the design of noise vectors tailored to motion representations. To address this, we first conduct a rigorous analysis of motion labels to identify three key properties, and then demonstrate that naive noise designs (e.g., Gaussian) are ineffective as they disrupt these properties.

\begin{property}[Temporal Smoothness]
    Human motion is constrained by muscle strength, skeletal structure, joint range of motion, and inertia, which inherently precludes abrupt changes in joint rotations $R$ over short time intervals, i.e., $\|\omega(t)\|$ is bounded:
    \begin{equation}
    \omega(t) \approx \frac{R(t + \Delta t) - R(t)}{\Delta t}, \quad \|\omega(t)\| \leq M,
    \end{equation}
    where $t$ is time; $\omega(t)$ is the first derivative of $R$ over $t$; \( \|\cdot\| \) is a vector norm (e.g., L2-norm). \( M \) represents a physiological upper bound derived from biomechanical constraints.
    \label{property1}
\end{property}

\begin{property}[Joint Correlation]
The human body is a highly coupled system of rigid skeletal chains, where proximal joint motion constrains distal joints (e.g., elbow rotation influences the shoulder). This structure can naturally be modeled as a skeleton tree, with the waist node as the root and edges representing joint dependencies and biomechanical constraints.
Then, for any parent-child joint pair \( (j_{\text{parent}}, j_{\text{child}}) \), the rotation range of $j_{\text{child}}$ is constrained by the rotation angle of $j_{\text{parent}}$ (e.g., the knee range of motion narrows when the hip is flexed):
\begin{equation}
{R}_{\text{child}}(t) \in A({R}_{\text{parent}}(t))    
\end{equation}
where the admissible set $A$ is defined as:
\begin{equation}
A = \{{R}_{\text{child}} | \phi_{\min}({R}_{\text{parent}}) \leq {R}_{\text{child}} \leq \phi_{\max}({R}_{\text{parent}})\}
\end{equation}
where $\phi$ is a Lipschitz continuous joint correlation function; \( \phi_{\min}, \phi_{\max} \) are its minimum and maximum values.
\label{property2}
\end{property}

\begin{property}[Low-Frequency Dominance]

Human motion data exhibits a predominance of low-frequency signals corresponding to normal movements, with high-frequency components (e.g., intense jitter) being relatively rare. 
Let $R(t) \in \mathbb{R}^d$ denote the $d$-dimensional motion label at time $t$. 
We define low-frequency dominance as: $\exists$ a frequency threshold $f_c \ll f_{\max}$ for all channels $c \in \{1, 2, ..., d\}$ and a ratio threshold $\alpha \gg 0$ such that:
\begin{equation}
\frac
{\sum_{c=1}^{d} \int_{0}^{f_c} P_c(f)  df}
{\sum_{c=1}^{d} \int_{0}^{f_{\max}} P_c(f)  df} 
\geq \alpha,
\end{equation}
where $f_{\max}$ is the Nyquist frequency ($f_{\max} = \frac{1}{2\Delta t}$ and $\Delta t$ is the IMU sampling interval); the power spectral density (PSD) $P_c(f)$ for each channel $c$ is computed as:
\begin{equation}
    P_c(f) = \left| \int_{0}^{T} R_c(t) e^{-i2\pi f t}  dt \right|^2
\end{equation}
where $T$ denotes the number of frames sampled by IMU.
In our experiments, we have $f_c=5\text{Hz}$ when $\alpha=0.7$.
\label{property3}
\end{property}

\subsubsection{Invalidity of Naive Strategies.} 

Given the above properties, we show that naive entropy-enhancing strategies, such as implementing $u$ (Eq.~\ref{eq:motion_label_smoothing}) as a Gaussian or uniform noise, are ineffective as they disrupt these properties. Specifically, since Gaussian and uniform noise are {\it independent and identically distributed (i.i.d.)}, they have:
\begin{itemize}
    \item {\bf An inherent trade-off between noise amplitude, and temporal smoothness together with joint correlation} (Properties~\ref{property1},~\ref{property2}). Sine an \textit{i.i.d.} distribution lacks dependencies across both temporal and joint dimensions, increasing the noise amplitude (necessary for effective regularization) inevitably amplifies discrepancies along these dimensions and violates Properties~\ref{property1} and~\ref{property2}.
    \item {\bf Flat power spectral densities (PSDs)} containing significant high-frequency components, contradicting Prop.~\ref{property3}.
\end{itemize}

\section{Method}
\label{sec:Method}
As mentioned above, adapting label smoothing to our task (Eq.~\ref{eq:motion_label_smoothing}) is challenging, as its $u$ must satisfy Properties~\ref{property1},~\ref{property2},~\ref{property3}, as well as having a sufficiently large noise amplitude for effective regularization.
 
To address this challenge, we propose a novel design that implements $u$ as a {\it skeleton‑based Perlin noise} as follows, ensuring continuity and smoothness while eliminating sharp discontinuities.

\subsection{Skeleton-based Perlin Noise}
\label{sec:Ske-Perlin}
Specifically, we define our skeleton‑based Perlin noise as:
\begin{equation}
    u = \texttt{sk-Perlin}(JC, \mathcal{H}, {size})
\end{equation}
where $JC$ represents the six joint chains defined from the SMPL~\cite{loper2023smpl} skeleton (\ie, left leg, right leg, left arm, right arm, torso, and head), with the six IMUs attached to their terminal joints; $\mathcal{H}=\{S_b,S_t,S_s,p,oct,l\}$ denotes the basic parameters in constructing the Perlin nosie~\cite{perlin1985image}, including base scale $S_b$, time scale $S_t$, space scale (joint scale) $S_s$, persistence $p$, octaves $oct$ and lacunarity $l$; and $size$ denotes the dimensional extent of each spatio-temporal axis, which we set to same dimension as the ground truth label.

\begin{figure}[t]
	\centering
 \includegraphics[width=0.95\linewidth]{./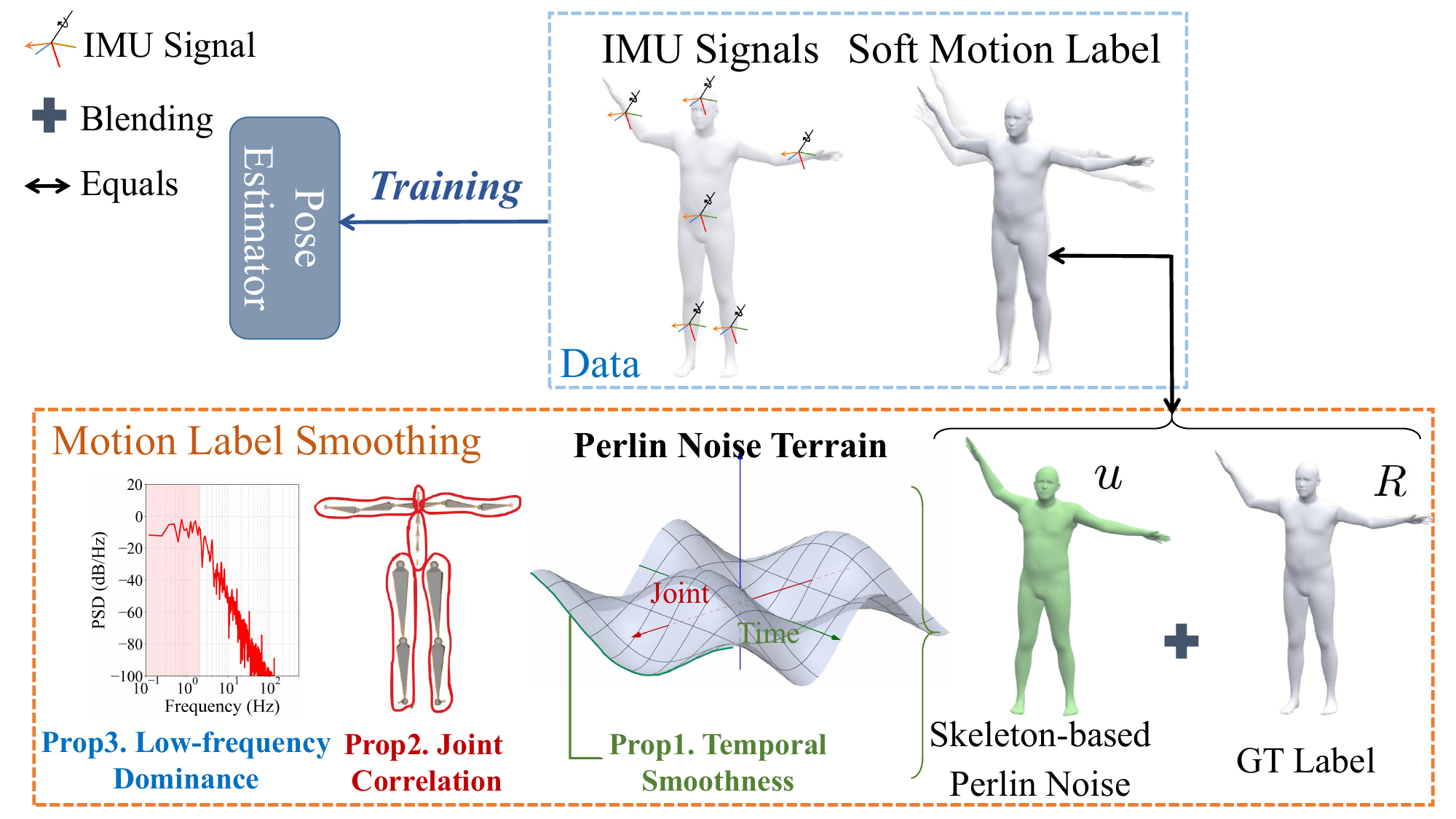}
 \caption{Overview of our \textit{Motion Label Smoothing} method. The ground truth motion label $R$ is blended with a carefully-constructed skeleton-based Perlin noise $u$, which satisfy Properties~\ref{property1},~\ref{property2},~\ref{property3}, as well as having a sufficiently large noise amplitude for effective regularization. }
\label{fig:Pipeline}
\end{figure}

\subsection{Amplitude-Decoupling and Properties-Satisfying Design}

Perlin noise~\cite{perlin1985image} creates smooth, natural textures and is widely utilized in computer graphics and natural phenomenon simulation. Its construction process involves dividing the three-dimensional space (\(time\), \(joint\), and \(channel\) in our \(\texttt{sk-Perlin}\)) into a grid of equal intervals, representing coordinates along the three dimensions of a given size, segmented by points \((x, y, z)\). At each grid point, a random unit gradient vector \(\mathbf{g}_{x,y,z}\) (with length 1 and random direction) is assigned. The influence at point \((x, y, z)\) is computed as the dot product of the gradient vectors from the surrounding eight grid points. Subsequently, an interpolation function smooths the values across these grid points. Enhanced detail is achieved by superimposing multiple noise layers of varying frequencies and amplitudes, known as octaves $oct$.

\subsubsection{Amplitude-Decoupling.}
The amplitude of the base noise, controlled by \( S_b \), directly determines the overall magnitude of our \(\texttt{sk-Perlin}\), while the interpolation function governs the smoothness of transitions between adjacent grid points. Thus, the amplitude and smoothness of Perlin noise are decoupled and controlled by distinct parameters.
In other words, the interpolated nature of Perlin noise (unlike the {\it i.i.d.} Gaussian or uniform noises) ensures smoothness while maintaining an effective noise amplitude \( S_b \).
 
\paragraph{Satisfaction of Properties 1 and 2.} As outlined above, interpolation along the \( x \)-direction ensures smoothness and continuity in the temporal dimension, with the temporal scaling factor \( S_t \) (temporal frequency) stretching the \( x \)-axis to enhance smoothness. Additionally, as illustrated in Fig.~\ref{fig:Pipeline}, our skeleton-based Perlin approach initially applies a base noise to each joint chain, followed by a single-octave noise to each joint within the chain, stripping all high-frequency details to produce ultra-smooth offsets. The final Perlin noise is synthesized by combining and scaling the base noise with the offsets. This method guarantees that the noise within each joint chain exhibits correlation (derived from the base noise) while distinguishing individual joints, thereby satisfying Property~\ref{property2}. We plot the different noises in the \( time-joint\) coordinate in Fig.~\ref{fig:noise} (a), clearly showing that our method, based on interpolation, ensures continuity in both temporal and spatial (joint) dimensions.

\paragraph{Satisfaction of Property 3.} The low-frequency dominance of Perlin noise arises from the low-resolution gradient field of the base grid and the attenuation of high-frequency components. The amplitude of octave superposition decays exponentially with frequency (\( \frac{1}{2^i} \)), while the persistence \( p \) regulates the contribution of high frequencies. We set a low octave count \( oct = 5 \) to reduce the number of high-frequency layers and adjust \( p = 0.5 \) to attenuate high-frequency effects, ensuring that low-frequency components predominate. This effect is validated through power spectral density (PSD) analysis, as shown in Figure~\ref{fig:noise} (b). Visually, while Gaussian and uniform noise display similar spectral densities across all frequencies, Perlin noise exhibits significantly higher density in low-frequency bands, with a monotonic decrease as frequency increases.

\begin{table*}[ht]
    \centering
    
    \fontsize{9.50}{9}\selectfont
    \begin{tabular*}{\textwidth}{lcccccc}
        \toprule
         & \multicolumn{3}{c}{\textbf{TotalCapture}} & \multicolumn{3}{c}{\textbf{ANDY}} \\
        \cmidrule(lr){2-4} \cmidrule(lr){5-7}
        \textbf{Method} & \textbf{SIP Err} & \textbf{Joint Err} & \textbf{Mesh Err} & \textbf{SIP Err} &  \textbf{Joint Err} & \textbf{Mesh Err} \\
        \midrule
        \textit{TransPose} & 14.28 &  5.31 & 5.89 & 31.91 &  13.77 & 18.96 \\
        \textit{TP+Ours} & 12.49 (\(\downarrow 12.54\%\)) & 5.00 (\(\downarrow 5.84\%\)) & 5.55 (\(\downarrow 5.77\%\)) & 31.20 (\(\downarrow 2.23\%\)) &  13.42 (\(\downarrow 2.54\%\)) & 18.46 (\(\downarrow 2.64\%\)) \\
        \midrule
        \textit{PIP} & 11.16 &  4.55 & 5.26 & 29.59 &  13.56 & 18.79 \\
        \textit{PIP+Ours} & 10.54 (\(\downarrow 5.56\%\))  & 4.38 (\(\downarrow 3.74\%\)) & 5.07 (\(\downarrow 3.61\%\)) & 29.04 (\(\downarrow 1.86\%\)) &  13.49 (\(\downarrow 0.52\%\)) & 18.67 (\(\downarrow 0.64\%\)) \\
        \midrule
        \textit{GlobalPose} & 9.85 &  3.96 & 4.35 & 39.58 &  17.66 & 23.45 \\
        \textit{GP+Ours} & 7.84 (\(\downarrow 20.41\%\))  & 3.26 (\(\downarrow 17.68\%\)) & 3.75 (\(\downarrow 13.79\%\)) & 36.71 (\(\downarrow 7.25\%\)) &  16.92 (\(\downarrow 4.19\%\)) & 22.22 (\(\downarrow 5.25\%\)) \\
        \midrule
        & \multicolumn{3}{c}{\textbf{CIP}} & \multicolumn{3}{c}{\textbf{DIP-IMU}} \\
        \cmidrule(lr){2-4} \cmidrule(lr){5-7}
        \textbf{Method} & \textbf{SIP Err} & \textbf{Joint Err} & \textbf{Mesh Err} & \textbf{SIP Err} &  \textbf{Joint Err} & \textbf{Mesh Err} \\
        \midrule
        \textit{TransPose} & 28.46 &  10.82 & 11.91 & 14.04 &  4.86 & 5.80 \\
        \textit{TP+Ours} & 26.19 (\(\downarrow 7.97\%\))  & 10.65 (\(\downarrow 1.57\%\)) & 11.71 (\(\downarrow 1.68\%\)) & 13.57 (\(\downarrow 3.35\%\)) &  4.64 (\(\downarrow 4.53\%\)) & 5.50 (\(\downarrow 5.17\%\)) \\
        \midrule
        \textit{PIP} & 25.57 &  9.02 & 10.76 & 12.08 &  4.33 & 5.06 \\
        \textit{PIP+Ours} & 23.87 (\(\downarrow 6.65\%\)) &  8.60 (\(\downarrow 4.66\%\)) & 10.27 (\(\downarrow 4.55\%\)) & 11.62 (\(\downarrow 3.81\%\)) &  4.18 (\(\downarrow 3.46\%\)) & 4.88 (\(\downarrow 3.56\%\)) \\
        \midrule
        \textit{GlobalPose} & 23.04 &  6.98 & 8.13 & 13.77 &  4.36 & 5.08 \\
        \textit{GP+Ours} & 22.32 (\(\downarrow 3.12\%\)) &  6.47 (\(\downarrow 7.31\%\)) & 7.70 (\(\downarrow 5.29\%\)) & 13.50 (\(\downarrow 1.96\%\)) &  4.27 (\(\downarrow 2.06\%\)) & 4.98 (\(\downarrow 1.97\%\)) \\
        \bottomrule
    \end{tabular*}
    \normalsize
    \caption{Quantitative comparisons between baseline methods and those augmented with our motion label smoothing method with error percentage reduction (\(\downarrow \text{percentage}\%\)). 
    }
    \label{tab:comparison}
\end{table*}

\section{Experiments}

\subsection{Implementation Details}

\subsubsection{Training Setup and Datasets.}
Our method is designed as a plug-and-play training tool that requires no modification to model architectures. Existing approaches that innovate through model architecture have established a standardized training pipeline, utilizing the AMASS~\cite{mahmood2019amass} dataset and synthesized IMU data as the training set, followed by fine-tuning with real IMU datasets \cite{huang2018deep, trumble2017total}. Our experiments adhere to this same setup. Since our method modifies only the joint rotation data, fine-tuning is performed exclusively on the pose estimation networks, while parameters of other networks, such as those predicting joint velocities or foot-ground contact probabilities, are retained using publicly available pre-trained weights. All training parameters and details strictly follow the original implementations.

Our test set is selected from four real IMU datasets, including TotalCapture \cite{trumble2017total}, ANDY \cite{maurice2019human}, CIP \cite{palermo2022raw} and DIP-IMU \cite{huang2018deep}.

\subsubsection{Baseline Methods}

We evaluate the effectiveness of our method on three representative baseline pose estimation algorithms:
\begin{itemize}
	\item TransPose~\cite{yi2021transpose}, the first real-time algorithm for global human motion tracking using only six IMUs;
	\item PIP~\cite{yi2022physical}, the first method to incorporate physical constraints through optimization, whose framework has been adopted by many follow-up works; and
	\item GlobalPose~\cite{yi2025improving}, the most recent and state-of-the-art algorithm in this domain.
\end{itemize}

\subsubsection{Evaluation Metrics.}
We employ four standard evaluation metrics  used in exsiting works to assess the effectiveness of the methods. \textit{SIP Error (\(^{\circ}\))}~\cite{von2017sparse} measures mean global rotation error of hips and shoulders; \textit{Angular Error (\(^{\circ}\))} measures mean global rotation error of all joints; \textit{Positional Error (cm)} measures mean position error of all joints; \textit{Mesh Error (cm)} measures mean vertex error of the posed SMPL meshes. Lower values indicate higher motion capture accuracy.


 \begin{figure}[t]

	\centering
 \includegraphics[width=1\linewidth]{./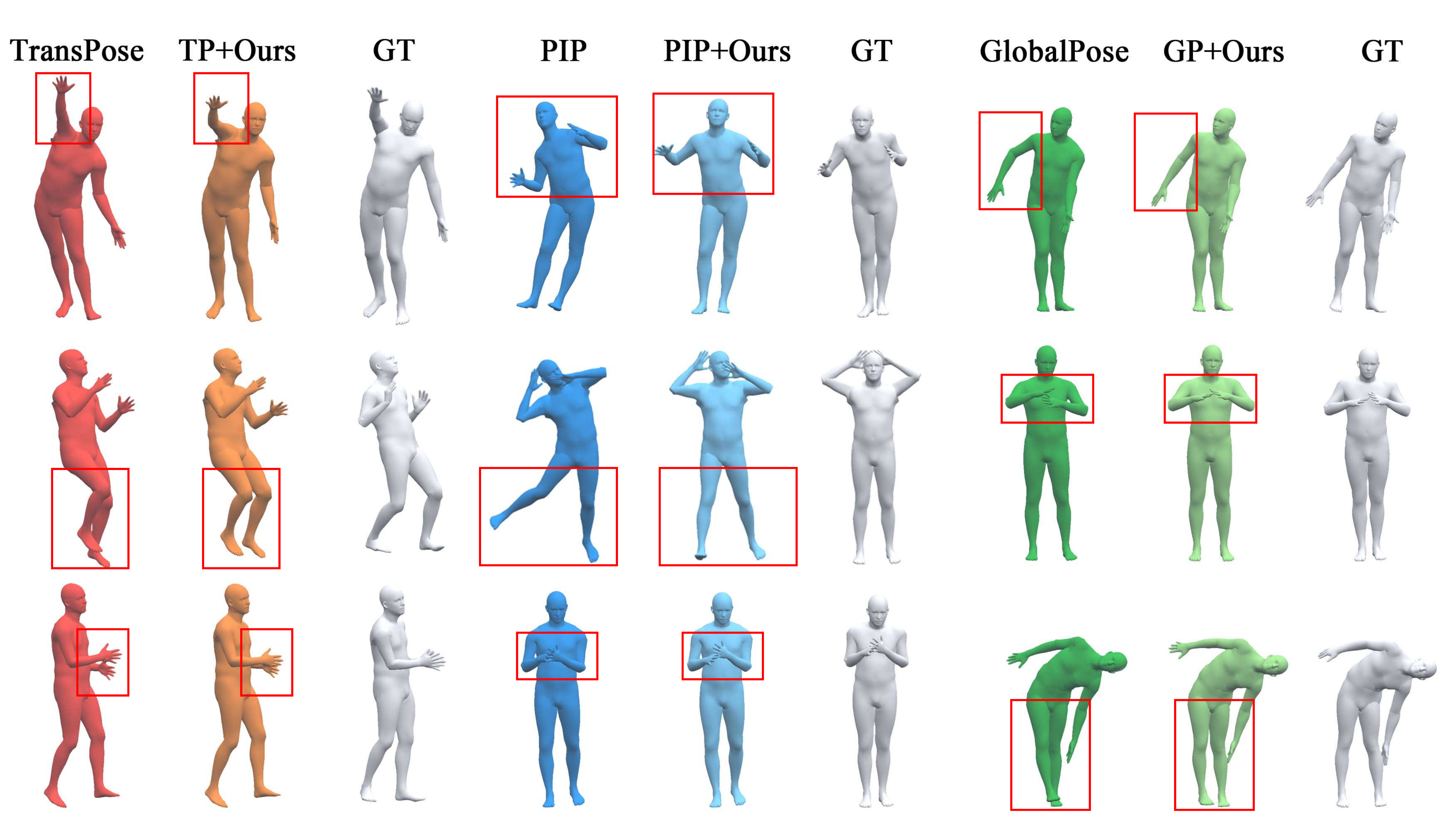}
	\caption{Qualitative comparisons with baseline methods. Examples are from the TotalCapture and CIP datasets. 
}
	\label{fig:Compare}
\end{figure}

\subsection{Comparisons}
\subsubsection{Quantitative Comparisons.} We compare the performance of baseline methods using conventional supervision with those fine-tuned using our approach across four test datasets. Table.~\ref{tab:comparison} reports the percentage reduction in error metrics for each of the three baseline algorithms after applying our method, showing consistent improvements across all settings. We attribute this to our method being the first regularization strategy specifically designed for sparse IMU-based motion capture. Its plug-and-play nature enables integration into a wide range of pose estimation pipelines, thereby expanding the current sparse-IMU based motion capture toolkit. Notably, we have advanced the state-of-the-art accuracy of GlobalPose, achieving a significant \( 20.41\% \) reduction in SIP Error on the TotalCapture dataset.

\subsubsection{Qualitative Comparisons.} Furthermore, we provide qualitative comparisons on the test data in Fig. \ref{fig:Compare}, revealing noticeable enhancements in actual motion quality when integrating our training method.  For instance, in the third example of TransPose (bottom-left panel of Fig.~\ref{fig:Compare}), our method yield improvements in the arm movements, transitioning hand positions from being misaligned across different levels to being aligned at the same level, bringing them closer to the ground truth compared to the original results. In the first example of GlobalPose (Fig.~\ref{fig:Compare}, top-right), the baseline method produces a bendy right arm that misaligns with the actual pose, whereas our method yields a straight arm that aligns accurately.

\subsection{Evaluation}

\subsubsection{Ablation Studies.}

We conduct ablation studies on the core properties of human motion addressed by our proposed method, following the logic outlined below. Using the baseline method as a control, we initially apply Gaussian noise as a naive entropy-enhancement label smoothing approach (\textit{Baseline w/ Label Smoothing}). Subsequently, guided by the properties we have defined, we refine the noise to align with human motion properties by sequentially incorporating: 1) Temporal Smoothness: We apply Gaussian smoothing to the noise in \textit{Baseline w/ Label Smoothing} (\textit{Baseline + T}); 2) Joint Correlation: We leverage the same joint chain constraints as detailed in the Method section to further enhance the noise's alignment with human motion properties (\textit{Baseline + T + J}); 3) Low-Frequency Dominance: We replace the Gaussian noise from step 2 with a Perlin noise field, designed with specific scale and persistence parameters to reflect low-frequency dominance (\textit{Baseline + T + J + L (ours)}). As shown in Table.~\ref{tab:ablation}, the naive label smoothing method provides a moderate improvement in the baseline model's performance. However, the sequential integration of the three defined properties further reduces errors, underscoring the significance of our proposed skeleton-based Perlin noise tailored to these properties.

\subsubsection{Alternative Design.}

In Table.~\ref{tab:alter}, we substantiate the necessity of our proposed motion label smoothing method by comparing it against alternative potential solutions, including the configurations below:

\begin{enumerate}
    \item \textbf{Naive Adaptation}: To evaluate the rationality of our approach of increasing label entropy, following the method outlined in the Naive Adaptation section, we replace the uniform vector \( u \) with two stationary motion vectors: the T-Pose and the average motion derived from the AMASS dataset \cite{mahmood2019amass}. This method maintains formal consistency with Eq.~\ref{eq:label_smoothing}, and we adjust \( \epsilon \) to 0.1 to align the magnitude of the added term with our noise blending scheme.

    \item \textbf{Entropy-Enhancement}: We compare our method with other approaches to increasing label entropy, including the addition of uniform noise and Gaussian noise with consistent intensity.

    \item \textbf{Alternatives}: Other potential label smoothing methods: 
    \begin{enumerate}
        \item \textit{Temporal Smoothing}: A naive temporal smoothing approach by directly applying Gaussian smoothing to the original labels.
        \item \textit{Knowledge Distillation}: Widely employed to enhance model performance, the soft labels produced by the teacher model in knowledge distillation act as an implicit form of label smoothing regularization (LSR)~\cite{yuan2020revisiting}. We utilize poses optimized with a physics-based module (proposed in \cite{yi2022physical} and refined in \cite{yi2025improving}), which incorporates physical information, as the applied distribution in Eq.~\ref{eq:label_smoothing}. This distribution is overlaid onto the ground truth to form a distillation-based LSR.
    \end{enumerate}
\end{enumerate}

Our method outperforms all the aforementioned solutions, a result we attribute to its motion-property-awareness. We analyze or mathematically demonstrate why these methods fall short of our approach, drawing on the human motion properties we have identified.

\begin{table}[ht]
    \centering
    \begin{tabular}{lcccc}
        \toprule
        Method & SIP& Ang & Joint & Mesh\\
        \toprule
        Baseline (GlobalPose) & 9.85 & 9.55 & 3.96 & 4.35\\
        \midrule
        \multicolumn{5}{l}{w/ Label Smoothing} \\
        \midrule
        Baseline & 8.82 & 8.65 & 3.82 & 4.43\\ 
        Baseline+\textit{T}  & 8.59 & 8.30 & 3.77 & 4.37\\
        Baseline+\textit{T}+\textit{J}  & 8.22 & 8.02 & 3.52 & 4.12\\
        Baseline+\textit{T}+\textit{J}+\textit{L}(ours)  & \textbf{7.84} & \textbf{7.87} & \textbf{3.26} & \textbf{3.75}\\
        \toprule
    \end{tabular}
    \caption{
    We examine the effectiveness of incorporating the proposed motion properties (Temporal Smoothness (\textit{T}), Joint Correlation (\textit{J}), and Low-frequency Dominance (\textit{L})).}
    \label{tab:ablation}
\end{table}

\begin{table}[ht]
    \centering
    \begin{tabular}{lcccc}
        \toprule
        Method & SIP& Ang & Joint & Mesh\\
        \toprule
        \multicolumn{5}{l}{Naive Adaptation} \\
        \midrule
        T-pose Vector & 8.97 & 8.97 & 3.96 & 4.73\\ 
        AMASS Mean Vector & 8.75 & 9.17 & 3.87 & 4.60\\
        \midrule
        \multicolumn{5}{l}{Entropy-Enhancement} \\
        \midrule
        Uniform Noise & 8.72 & 8.61 & 3.82 & 4.44\\ 
        Gaussian Noise & 8.82 & 8.65 & 3.82 & 4.43\\
        \midrule
        \multicolumn{5}{l}{Alternatives} \\
        \midrule
        Temporal Smoothing  & 8.23 & 8.07 & 3.57 & 4.15\\
        Distillation & 8.46 & 8.25 & 3.59 & 4.17\\ 
        \midrule
        Ours  & \textbf{7.84} & \textbf{7.87} & \textbf{3.26} & \textbf{3.75}\\
        \toprule
    \end{tabular}
    \caption{Quantitative comparison with alternative strategies.}
    \label{tab:alter}
\end{table}

\subsection{Limitations and Discussions}

While our method enhances existing sparse-IMU systems, it also inherits the inherent limitations of these methods. For instance, prior work relies on the template SMPL body model, overlooking the impact of body shape on IMU data, which hinders generalization to individuals with diverse body types, such as children or exceptionally tall subjects. However, given the plug-and-play nature of our motion label smoothing method, we anticipate that integrating shape-aware algorithms will effectively address this issue.
Moreover, existing methods leverage public datasets such as AMASS for training; although extensive, these datasets are still limited in the range of motion types, thereby posing challenges in reconstructing complex motions like slipping and street dance.

\section{Conclusion}

In this work, we introduce the first regularization tool for the sparse-IMU based motion capture AI toolkit. We initially show that a naive adaptation of traditional label smoothing is insufficient. Subsequently, we conduct a systematic study, identifying three inherent properties of human motion data that constrain the effectiveness of naive entropy-enhancement label smoothing techniques. Finally, we propose a novel task-specific motion label smoothing method, achieved by blending skeleton-based Perlin noise with ground-truth labels, and show its alignment with the identified properties. Extensive experiments confirm the effectiveness of our motion label smoothing method.

\section{Acknowledgments}
This work is supported by National Natural Science Foundation of China (62472364, 62072383), the Public Technology Service Platform Project of Xiamen City (No.3502Z20231043), Xiaomi Young Talents Program / Xiaomi Foundation and the Fundamental Research Funds for the Central Universities (20720240058), “Young Eagle Plan" Top Talents of Fujian Province. Anjun Chen is the corresponding author. 

\bibliography{ref}


\appendix
\renewcommand{\thesection}{\Alph{section}}
\begin{center}{\LARGE\bf Appendix\par}\vspace{1em}\end{center}
\addcontentsline{toc}{section}{Appendix}

\section{Proofs of Naive Entropy-Enhancement Strategies}
In this section, we demonstrate that naive entropy-enhancement strategies, including blending Gaussian noise or uniform noise, are ineffective as they cannot ensure a sufficiently large noise amplitude for effective regularization without disrupting motion properties.

\begin{proposition}
    Applying Gaussian noise or uniform noise with sufficiently large noise amplitude does not satisfy Property 1 (Temporal Smoothness).
\end{proposition}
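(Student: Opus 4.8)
The plan is to compute the finite-difference angular velocity of the smoothed labels $R' = (1-\epsilon)R + \epsilon u$ and show that the contribution of the noise to it grows without bound as the noise amplitude increases, eventually exceeding the physiological cap $M$ of Property~\ref{property1}. First I would fix notation: let $u$ denote the noise, i.i.d.\ across frames, zero-mean without loss of generality, with per-frame covariance $\Sigma=\mathbb{E}[u(t)u(t)^\top]$ and amplitude $\sigma$ (per-channel standard deviation in the Gaussian case, proportional to the side length of the support cube in the uniform case), and write $\delta u(t):=u(t+\Delta t)-u(t)$. Then
\begin{equation}
\omega'(t) := \frac{R'(t+\Delta t)-R'(t)}{\Delta t} = (1-\epsilon)\,\omega_R(t) + \frac{\epsilon}{\Delta t}\,\delta u(t),
\end{equation}
where $\omega_R$ is the ground-truth angular velocity with $\|\omega_R(t)\|\le M$. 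Property~\ref{property1} applied to $R'$ would demand $\|\omega'(t)\|\le M$ for all $t$.

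Next I would analyze the increment $\delta u(t)$. Since $u(t)$ and $u(t+\Delta t)$ are independent and identically distributed, $\delta u(t)$ is zero-mean with covariance $2\Sigma$, hence $\mathbb{E}\|\delta u(t)\|^2 = 2\,\mathrm{tr}(\Sigma) = c_d\,\sigma^2$ for a constant $c_d>0$ depending only on the dimension $d$ and the noise family ($c_d=2d$ for $\Sigma=\sigma^2 I_d$; $c_d=2d/3$ for the uniform case up to the chosen parametrization). Because the noise is independent of the ground truth and $\delta u$ is centered, the cross term vanishes, giving
\begin{equation}
\mathbb{E}\|\omega'(t)\|^2 = (1-\epsilon)^2\,\mathbb{E}\|\omega_R(t)\|^2 + \frac{\epsilon^2}{\Delta t^2}\,\mathbb{E}\|\delta u(t)\|^2 \;\ge\; \frac{\epsilon^2 c_d}{\Delta t^2}\,\sigma^2 .
\end{equation}
Thus once $\sigma>\tau:=\dfrac{M\,\Delta t}{\epsilon\sqrt{c_d}}$ we have $\mathbb{E}\|\omega'(t)\|^2>M^2$, so $\|\omega'(t)\|>M$ on an event of positive probability, contradicting the deterministic cap of Property~\ref{property1}. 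To strengthen "positive probability" to "with high probability", I would invoke concentration of $\|\delta u(t)\|^2$ about its mean --- a $\chi^2_d$ tail bound for Gaussian noise, a bounded-differences / Hoeffding bound for uniform noise --- yielding $\Pr[\|\omega'(t)\|>M]\to 1$ as $\sigma\to\infty$. Since $\Delta t$ is on the order of milliseconds, $\tau$ is tiny, so \emph{any} amplitude large enough to act as meaningful regularization already crosses it; this is exactly the inherent trade-off between noise amplitude and temporal smoothness asserted in the analysis.

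The main obstacle is reconciling the deterministic phrasing of Property~\ref{property1} with the randomness of the noise: a Gaussian increment cannot be bounded almost surely, so the honest statement is the high-probability (or in-expectation) one above, and the argument must be explicit that "sufficiently large amplitude" means $\sigma$ exceeding the fixed threshold $\tau(M,\epsilon,\Delta t,d)$. A secondary, purely bookkeeping point is the uniform case: there the noise is bounded, so one can additionally state an almost-sure violation by taking the worst-case increment over the support (which again scales linearly with the side length), and it is really only the Gaussian case that forces the probabilistic language.
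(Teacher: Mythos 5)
Your proposal is correct and starts from the same decomposition as the paper's proof: both write the finite difference of the smoothed label as $(1-\epsilon)\,\Delta R(t) + \epsilon\,\Delta u(t)$ and argue that the i.i.d.\ noise increment scales with the noise amplitude, so the cap $M$ is violated with growing probability. Where you diverge is in how that violation is certified. The paper applies the triangle inequality to obtain $\|\Delta R'(t)/\Delta t\|_2 \le (1-\epsilon)M + \epsilon\|\Delta u(t)/\Delta t\|_2$, reads off the sufficient condition $\|\Delta u(t)/\Delta t\|_2 \le M$, and then notes that the Gaussian difference $X-Y\sim\mathcal{N}(0,2\sigma^2)$ (resp.\ the uniform difference with variance $2(b-a)^2/12$) breaches this bound with probability increasing in the amplitude; strictly, this only shows that an \emph{upper bound} on $\|\omega'\|$ exceeds $M$, not that $\|\omega'\|$ itself does. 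Your orthogonality-based second-moment computation, $\mathbb{E}\|\omega'(t)\|^2 \ge \epsilon^2 c_d\,\sigma^2/\Delta t^2$ via the vanishing cross term, is a genuine \emph{lower} bound and therefore directly implies $\|\omega'(t)\|>M$ with positive (and, by concentration, high) probability once $\sigma$ exceeds an explicit threshold $\tau(M,\epsilon,\Delta t,d)$. This closes a small logical gap in the paper's version and makes the quantifier ``sufficiently large amplitude'' precise; the paper's version, in exchange, is shorter and keeps the ground-truth contribution $(1-\epsilon)M$ visible in the trade-off. Your caveat about reconciling the deterministic cap with unbounded Gaussian increments, and the bounded uniform case admitting an almost-sure worst-case statement, is exactly the tension the paper papers over with its informal ``chance to violate increases'' phrasing.
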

\begin{proof}
When perturbing labels with a noise vector $u$, for time $ t $, the perturbed label is given by:
\begin{equation}
    R'(t) = (1-\epsilon)R(t) + \epsilon u(t)
    \label{eq:motion_label_smoothing}
\end{equation}
According to Property 1, the ground truth labels naturally exhibit temporal smoothness, as characterized by the constraint $ \| \omega(t)\|_2 = \|\Delta R(t) / \Delta t\|_2 \leq M $.
Thus, we have:
\begin{equation}
\begin{aligned}
    &\Delta R'(t) &= &(1-\epsilon)\Delta R(t) &+ &\epsilon \Delta u(t)\\
    &\|\Delta R'(t)\|_2 &\leq &(1-\epsilon)\|\Delta R(t)\|_2 &+ &\epsilon \|\Delta u(t)\|_2\\
    &\|\Delta R'(t) /\Delta t \|_2 &\leq &(1-\epsilon)\|\Delta  R(t) /\Delta t \|_2 &+ &\epsilon \|\Delta u(t) /\Delta t \|_2\\
    &\|R'(t) /\Delta t \|_2 &\leq &(1-\epsilon)M &+ &\epsilon \|\Delta u(t) /\Delta t \|_2
\end{aligned}
\end{equation}
To ensure that $\| \omega'(t)\|_2 = \|R'(t) /\Delta t \|_2  \leq M$ (i.e., satisfying Property 1),
\begin{equation}
    \begin{aligned}
        &\|R'(t) /\Delta t \|_2 \leq (1-\epsilon)M + \epsilon \|\Delta u(t) /\Delta t \|_2 \leq M\\
        &\Leftrightarrow \epsilon \|\Delta u(t) /\Delta t \|_2 \leq M - (1-\epsilon)M\\
        &\Leftrightarrow \|\Delta u(t) /\Delta t \|_2 \leq M
    \end{aligned}
    \label{eq:noise_temporal_contraint}
\end{equation}
which requires the gradient magnitude of the noise term $ u(t) $ to be bounded by the same physiological upper bound $ M$.

\vspace{2mm}
\noindent \textit{Naive entropy-Enhancement strategies.} 
Applying Eq.~\ref{eq:noise_temporal_contraint} to naive entropy-Enhancement strategies, we have:
\begin{itemize}
    \item For Gaussian noise $ u_{gau} \sim \mathcal{N}(\mu, \sigma^2) $, given two Gaussian samples $X,Y \sim \mathcal{N}(\mu, \sigma^2)$, we have:
    \begin{equation}
        \|\Delta u_{gau}(t) /\Delta t \|_2 \leq \|(X-Y)/\Delta t\|_2 \leq M
    \end{equation}
    Since $X-Y \sim \mathcal{N}(\mu-\mu, \sigma^2+\sigma^2)$, it is unbounded and the chance to violate Property 1 ($\|\Delta u_{gau}(t) /\Delta t \|_2>M$) increases with increasing noise amplitude determined by $\sigma$.
    \item For uniform noise $ u_{uni} \sim \mathcal{U}(a, b) $, given two samples $X,Y \sim \mathcal{U}(a, b)$, we have:
    \begin{equation}
        \|\Delta u_{uni}(t) /\Delta t \|_2 \leq \|(X-Y)/\Delta t\|_2 \leq M
    \end{equation}
    where the mean of $X-Y$ is 0 with variance $2\times(b-a)^2/12$. It can be observed that the chance to violate Property 1 ($\|\Delta u_{uni}(t) /\Delta t \|_2>M$) increases with the noise amplitude determined by $b-a$.
\end{itemize}
Therefore, applying Gaussian noise or uniform noise are both suboptimal as increasing their noise amplitudes risks disrupting Property 1.

\end{proof}

\begin{proposition}
    Applying Gaussian noise or uniform noise with sufficiently large noise amplitude does not satisfy Property 2 (Joint Correlation).
\end{proposition}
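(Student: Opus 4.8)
The plan is to mirror the structure of the proof of the previous proposition (Temporal Smoothness), since Property 2 is likewise a \emph{dependency} constraint --- this time across the joint axis rather than the time axis --- and i.i.d.\ noise has no such dependency by construction. First I would write the perturbed labels for a parent--child joint pair at a fixed time $t$, namely $R'_{\text{parent}}(t) = (1-\epsilon)R_{\text{parent}}(t) + \epsilon u_{\text{parent}}(t)$ and $R'_{\text{child}}(t) = (1-\epsilon)R_{\text{child}}(t) + \epsilon u_{\text{child}}(t)$. By Property~2, the ground truth satisfies $R_{\text{child}}(t) \in A(R_{\text{parent}}(t))$, i.e.\ $\phi_{\min}(R_{\text{parent}}(t)) \le R_{\text{child}}(t) \le \phi_{\max}(R_{\text{parent}}(t))$ with $\phi$ Lipschitz. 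The goal is to find a necessary condition on $u_{\text{parent}}, u_{\text{child}}$ for $R'_{\text{child}}(t) \in A(R'_{\text{parent}}(t))$ to still hold.

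Next I would exploit the Lipschitz continuity of $\phi$ (and hence of $\phi_{\min}, \phi_{\max}$, with some constant $L$) to bound how much the admissible interval can shift when the parent rotation is perturbed by $\epsilon u_{\text{parent}}(t)$: the endpoints move by at most $L\epsilon\|u_{\text{parent}}(t)\|$. Combining the slack already present in the ground truth (how far $R_{\text{child}}(t)$ sits inside $A(R_{\text{parent}}(t))$, call it some margin $\delta$) with the displacement $\epsilon u_{\text{child}}(t)$ of the child and the interval shift, I would derive a constraint of the form $\epsilon(\|u_{\text{child}}(t)\| + L\|u_{\text{parent}}(t)\|) \le \delta$, i.e.\ the ``joint-difference'' of the noise must be bounded by a fixed quantity independent of $\epsilon$. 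This is the exact analogue of Eq.~(\ref{eq:noise_temporal_contraint}) in the previous proof, just with the discrete difference operator acting along the joint/kinematic-chain axis instead of along $t$.

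Then I would specialize to Gaussian noise $u_{gau}\sim\mathcal{N}(\mu,\sigma^2)$ and uniform noise $u_{uni}\sim\mathcal{U}(a,b)$: because entries indexed by different joints are i.i.d., $u_{\text{child}}(t) - u_{\text{parent}}(t)$ (or the relevant linear combination) is again Gaussian with variance $2\sigma^2$ (resp.\ a symmetric distribution with variance $2(b-a)^2/12$), so it is unbounded (resp.\ has support growing with $b-a$), and the probability of violating the bound grows monotonically with the noise amplitude $\sigma$ (resp.\ $b-a$). Since effective regularization demands a non-negligible amplitude, one cannot simultaneously keep the amplitude large and the joint-difference small --- hence Property~2 is necessarily violated. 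I would close by noting this is precisely the ``inherent trade-off'' asserted in the Analysis section, and contrast it with Perlin noise, whose inter-joint values are interpolated from a shared base field and therefore correlated, decoupling amplitude from the joint-difference magnitude.

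The main obstacle I anticipate is making the ``slack''/margin argument rigorous without over-specifying the geometry of the admissible set $A$: the paper's definition of $A$ via scalar inequalities $\phi_{\min}\le R_{\text{child}}\le\phi_{\max}$ is somewhat informal (rotations are not scalars, and whether $R_{\text{child}}(t)$ lies strictly in the interior with positive margin is an assumption), so I would either state the needed interior-margin assumption explicitly or phrase the conclusion probabilistically --- ``for any fixed $\epsilon>0$, $\Pr[R'_{\text{child}}(t)\notin A(R'_{\text{parent}}(t))]\to 1$ as the amplitude grows'' --- which sidesteps the need for a uniform deterministic bound and matches the hedged ``the chance to violate $\dots$ increases'' language already used in the Temporal Smoothness proof.
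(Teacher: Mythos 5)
Your proposal is correct and follows essentially the same route as the paper's proof: perturb the parent--child pair, use the Lipschitz continuity of $\phi$ to bound the shift of the admissible interval, reduce the requirement to a fixed bound on the linear combination $\mathrm{L}_\phi\, u_{\text{parent}}(t) + u_{\text{child}}(t)$, and then observe that for i.i.d.\ Gaussian or uniform noise this combination is unbounded (or has support growing with $b-a$), so the violation probability grows with amplitude. Your probabilistic hedging and the explicit interior-margin caveat are, if anything, slightly more careful than the paper's own write-up.
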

\begin{proof}
According to Property 2, the ground truth motion label $R_{\text{child}}$ satisfy:
\begin{equation}
\begin{aligned}
\phi_{min}(R_{\text{parent}}(t)) \leq  R_{\text{child}}(t) \leq \phi_{max}(R_{\text{parent}}(t))
\label{eq:prop2_condition}
\end{aligned}
\end{equation}
Same as Eq.~\ref{eq:motion_label_smoothing}, after applying noise, the perturbed motion labels become:
\begin{equation}
\begin{aligned}
    & R'_{\text{parent}}(t) = (1-\epsilon)R_{\text{parent}}(t) + \epsilon u_{\text{parent}}(t), \\
    & R'_{\text{child}}(t) = (1-\epsilon)R_{\text{child}}(t) + \epsilon u_{\text{child}}(t)
    \label{eq:parent_child_pertubation}
\end{aligned}
\end{equation}
And Property 2 requires the child joint angle $R'_{\text{child}}$:
\begin{equation}
\begin{aligned}
\phi_{min}(R'_{\text{parent}}(t)) \leq  R'_{\text{child}}(t) \leq \phi_{max}(R'_{\text{parent}}(t))
\end{aligned}
\label{eq:prop.2}
\end{equation}
Since $\phi$ is a Lipschitz function, let its Lipschitz constant be $\mathrm{L}_\phi$, we have:
\begin{equation}
\begin{aligned}
    R'_{\text{child}}(t) \geq \phi_{min}(R_{\text{parent}}(t)) - \mathrm{L}_\phi(R'_{\text{parent}}(t) - R_{\text{parent}}(t)) \\
    R'_{\text{child}}(t) \leq \phi_{max}(R_{\text{parent}}(t)) + \mathrm{L}_\phi(R'_{\text{parent}}(t) - R_{\text{parent}}(t)) 
\end{aligned}
\end{equation}
Substituting Eq.~\ref{eq:parent_child_pertubation}, we have
\begin{equation}
\begin{aligned}
    R'_{\text{child}}(t) \geq \phi_{min}(R_{\text{parent}}(t)) - \epsilon \mathrm{L}_\phi(R_{\text{parent}}(t) - u_{\text{parent}}(t)) \\
    R'_{\text{child}}(t) \leq \phi_{max}(R_{\text{parent}}(t)) + \epsilon \mathrm{L}_\phi(R_{\text{parent}}(t) - u_{\text{parent}}(t))
\end{aligned}
\end{equation}
Without loss of generality, we take the lower bound equation as an example and substitute Eq.~\ref{eq:parent_child_pertubation}, and have:
\begin{equation}
\begin{aligned}
    (1-\epsilon)R_{\text{child}}(t) &+ \epsilon u_{\text{child}}(t) \\ 
    &\geq \phi_{min}(R_{\text{parent}}(t)) \\
    &  - \epsilon \mathrm{L}_\phi(R_{\text{parent}}(t) - u_{\text{parent}}(t)) 
\end{aligned}
\end{equation}
which is equivalent to:
\begin{equation}
\begin{aligned}
    &R_{\text{child}}(t) \geq \frac{\phi_{min}(R_{\text{parent}}(t)) - \epsilon \mathrm{L}_\phi \cdot R_{\text{parent}}(t)}{1-\epsilon}\\
    &+ \frac{\epsilon \mathrm{L}_\phi\cdot u_{\text{parent}}(t) - \epsilon u_{\text{child}}(t)}{1-\epsilon}
\end{aligned}
\end{equation}
Recall the condition in Eq.~\ref{eq:prop2_condition}, the above requirement is guaranteed to be satisfied when:
\begin{equation}
\begin{aligned}
    &\phi_{min}(R_{\text{parent}}(t)) \geq \frac{\phi_{min}(R_{\text{parent}}(t)) - \epsilon\mathrm{L}_\phi \cdot R_{\text{parent}}(t)}{1-\epsilon}\\
    &+ \frac{\epsilon (\mathrm{L}_\phi\cdot u_{\text{parent}}(t) - u_{\text{child}}(t))}{1-\epsilon}
\end{aligned}
\end{equation}
which is equivalent to:
\begin{equation}
\begin{aligned}
    \phi_{min}(R_{\text{parent}}(t)) \leq \mathrm{L}_\phi \cdot R_{\text{parent}}(t) - \mathrm{L}_\phi\cdot u_{\text{parent}}(t) - u_{\text{child}}(t)
\end{aligned}
\end{equation}
Thus, we have:
\begin{equation}
\begin{aligned}
    \mathrm{L}_\phi\cdot u_{\text{parent}}(t) + u_{\text{child}}(t) \leq \mathrm{L}_\phi \cdot R_{\text{parent}}(t) - \phi_{min}(R_{\text{parent}}(t))
\end{aligned}
\label{eq:prop2_resulting_relationship}
\end{equation}
which indicates that $\mathrm{L}_\phi\cdot u_{\text{parent}}(t) + u_{\text{child}}(t)$ should be bounded by a constant. Same logic applies for the lower bound.

\noindent \textit{Naive entropy-Enhancement strategies.} 
Similar to the proof in Proposition 1, we have:
\begin{itemize}
    \item For Gaussian noise $\mathcal{N}(\mu, \sigma^2) $, we have $u_{\text{parent}},u_{\text{child}} \sim \mathcal{N}(\mu, \sigma^2)$, and $\mathrm{L}_\phi\cdot u_{\text{parent}}(t) + u_{\text{child}}(t) \sim \mathcal{N}(\mathrm{L}_\phi\mu+\mu, \mathrm{L}_\phi\sigma^2+\sigma^2)$, it is unbounded and the chance to violate Property 2 (Eq.~\ref{eq:prop2_resulting_relationship}) increases with the noise amplitude determined by $\mu$ and $\sigma$.
    \item For uniform noise $ \mathcal{U}(a, b) $, we have $u_{\text{parent}},u_{\text{child}} \sim \mathcal{U}(a, b)$, and the mean of $\mathrm{L}_\phi\cdot u_{\text{parent}}(t) + u_{\text{child}}(t)$ is $a+b$ with variance $(b-a)^2/6$. It can be observed that the chance to violate Property increases with the noise amplitude determined by $a+b$ and $b-a$.
\end{itemize}
Therefore, applying Gaussian noise or uniform noise are both suboptimal as increasing their noise amplitudes risks disrupting Property 2.

\end{proof}


\begin{proposition}
    Applying Gaussian noise or uniform noise does not satisfy Property 3 ( Low-frequency Dominance).
\end{proposition}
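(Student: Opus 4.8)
The plan is to compute the power spectral density of i.i.d.\ noise in closed form, show it is \emph{white} (flat in $f$), and then observe that a flat spectrum provably cannot place a fraction $\alpha \gg 0$ of its power in a narrow low band $[0,f_c]$ with $f_c \ll f_{\max}$. Note that, unlike Propositions~1 and~2, amplitude plays no role here: the failure is a structural consequence of the i.i.d.\ assumption and is invariant to rescaling the noise, so the statement is correctly phrased without a ``sufficiently large amplitude'' qualifier.

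First I would take the expectation of the periodogram $P_c(f)$ defined in Property~3 over a noise channel $u_c(t)$. Writing $u_c(t) = \mu + \tilde u_c(t)$ with $\tilde u_c$ zero-mean (the natural choice for label smoothing, so that $\mathbb{E}[R'] = R$), the cross terms vanish by independence and the double sum collapses to
$$\mathbb{E}\big[P_c(f)\big] \;=\; \sigma^2 T \;+\; \mu^2 \Big| \textstyle\sum_{t} e^{-i 2\pi f t} \Big|^2,$$
a constant floor $\sigma^2 T$ plus, when $\mu \neq 0$, a Dirichlet-kernel spike of width $O(1/T)$ localized at $f=0$. This is exactly the discrete Wiener--Khinchin statement that white noise (delta autocorrelation) has a flat PSD. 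The identical expansion covers $\mathcal{U}(a,b)$ with $\sigma^2 = (b-a)^2/12$ and $\mu=(a+b)/2$.

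Next I would substitute this flat PSD into the Property~3 ratio. Writing $C_c$ for the (constant in $f$) PSD level of channel $c$, the numerator is $\big(\sum_c C_c\big) f_c$ and the denominator is $\big(\sum_c C_c\big) f_{\max}$, so the ratio equals $f_c/f_{\max}$; since Property~3 insists $f_c \ll f_{\max}$, this is $\ll 1$, hence strictly below any admissible $\alpha \gg 0$ (concretely, with the paper's $f_c=5$\,Hz and a $60$\,Hz IMU, $f_c/f_{\max}=1/6<0.7$). A nonzero mean only contributes a single-frequency DC term --- measure zero in the continuous formulation, one bin out of $T$ in the discrete one --- which cannot rescue low-frequency dominance in any non-degenerate sense, and a constant bias on every joint and frame is in any case not a legitimate regularization perturbation. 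I would phrase the conclusion as: no pair $(f_c,\alpha)$ with $f_c \ll f_{\max}$, $\alpha \gg 0$ satisfies the inequality, so Property~3 fails for both noise families. For contrast with our method I would then recall that octave superposition with amplitudes $1/2^i$ and persistence $p=1/2$ gives Perlin noise a PSD that \emph{decays monotonically} in $f$, so its low-band ratio can be pushed above $\alpha$ by choosing the base scale and octave count --- which is precisely why the skeleton-based Perlin design passes the test that white noise fails.

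The hard part will not be the white-noise computation (routine), but two bookkeeping issues: (i) the finite-$T$ periodogram is only \emph{asymptotically} flat, exhibiting $O(1/T)$ spectral leakage near the band edges, so I must either pass to the $T\to\infty$ PSD or carry an explicit error term; and (ii) pinning down the logic of ``does not satisfy Property~3'' --- since that property is existential in $(f_c,\alpha)$, its negation requires the ratio to remain small \emph{uniformly} over all admissible $f_c$, which the exact identity $f_c/f_{\max}$ makes clean but which should be stated explicitly rather than argued from a single numerical example.
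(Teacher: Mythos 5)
Your proposal is correct and follows essentially the same route as the paper's proof: both arguments reduce to the observation that i.i.d.\ Gaussian or uniform noise is white, hence has a flat power spectral density, which cannot concentrate a fraction $\alpha \gg 0$ of its power in a band $[0,f_c]$ with $f_c \ll f_{\max}$. Your version is more careful than the paper's --- which simply asserts $P_c^{u}(f)\propto\sigma^2\Delta t$ and declares the violation ``obvious'' --- in that you explicitly evaluate the Property~3 ratio as $f_c/f_{\max}$ and handle the nonzero-mean DC spike and finite-$T$ leakage, but the underlying idea is identical.
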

\begin{proof}
For Gaussian noise and uniform noise, their respective noise models are defined as:
$ u_{gau} \sim \mathcal{N}(\mu, \sigma^2) $,
$ u_{uni} \sim \mathcal{U}(a, b) $, with independence across all time instances.
Since both Gaussian noise and uniform noise are white noise, their power spectral density (PSD) $P_c$ is {\it constant}:
\begin{equation}
\begin{aligned} 
        P_c^{u}(f) \propto {\sigma^2}{\Delta t},
\end{aligned}
\end{equation}
where $\sigma^2$ is the variance of the distribution ($\frac{(a-b)^2}{12}$ for $u_{uni}$). A constant PSD obviously violates Property 3.


\end{proof}




\section{Proofs of Our Skeleton-based Perlin Noise}
In this section, we demonstrate that our skeleton-based Perlin noise is compatible with the three motion properties while perserving a sufficiently large noise amplitude for effective regularization, noting that ground-truth human motion data inherently satisfy these properties.
\begin{proposition}
Our skeleton-based Perlin noise (sk-Perlin) satisfies Property 1 (Temporal Smoothness) with a sufficiently large noise amplitude to enable effective regularization.
\end{proposition}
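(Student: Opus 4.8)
The plan is to show that the temporal gradient of the skeleton-based Perlin noise is bounded by a constant that is independent of the base amplitude $S_b$, and then argue that this constant can be made to respect the physiological bound $M$ by tuning the temporal scale $S_t$ alone—i.e., that amplitude and smoothness are genuinely decoupled. First I would recall from the earlier discussion (Eq.~\ref{eq:noise_temporal_contraint}) that the requirement for Property~1 to survive label smoothing is $\|\Delta u(t)/\Delta t\|_2 \le M$, so the whole proof reduces to bounding the finite-difference temporal derivative of $u = \texttt{sk-Perlin}(JC,\mathcal{H},size)$.

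Second I would exploit the explicit construction of Perlin noise. Along the time axis, the value at a point is a smooth (quintic Hermite) interpolation $s(\tau) = 6\tau^5 - 15\tau^4 + 10\tau^3$ between dot products of unit gradient vectors at neighbouring lattice points. Since each gradient has unit norm, the lattice contributions are bounded in magnitude by the cell diagonal, and the interpolant $s$ has a derivative $s'(\tau) = 30\tau^2(\tau-1)^2$ that attains a finite maximum ($15/8$) on $[0,1]$. Hence a single-octave Perlin layer is Lipschitz in $t$ with constant proportional to $S_b/\lambda_t$, where $\lambda_t$ is the lattice spacing along time induced by $S_t$. Summing over the $oct$ octaves with persistence $p$ and lacunarity $l$, the temporal Lipschitz constant of the full noise is bounded by $S_b \cdot C \sum_{i=0}^{oct-1} (p l)^i / \lambda_t$ for an absolute constant $C$; with $p=0.5$ and the chosen $l$, this geometric sum converges, so $\|\Delta u(t)/\Delta t\|_2 \le S_b \cdot C' / \lambda_t$. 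The key observation is that $S_b$ (amplitude) and $\lambda_t = \lambda_t(S_t)$ (smoothness scale) are free, independent parameters: for any target amplitude $S_b$ needed for effective regularization, one can enlarge $\lambda_t$ (equivalently shrink $S_t$) so that $S_b C'/\lambda_t \le M$. This establishes Property~1 while keeping the amplitude arbitrarily large, which is exactly the amplitude-decoupling claim.

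Third, I would note that the same finite-difference inequality already used in Proposition~1 applies verbatim: $\|\Delta R'(t)/\Delta t\|_2 \le (1-\epsilon)M + \epsilon\|\Delta u(t)/\Delta t\|_2 \le (1-\epsilon)M + \epsilon M = M$, so the smoothed labels $R'$ themselves satisfy Property~1. The contrast with the i.i.d.\ case is that there $\|\Delta u/\Delta t\|_2$ grows without bound as the amplitude grows, whereas here it is controlled by a separate knob.

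\textbf{Main obstacle.} The delicate part is not the geometry of a single octave but making precise the claim that $\lambda_t$ and $S_b$ are truly independent controls in the presence of octave superposition: higher octaves have finer lattices ($\lambda_t / l^i$) and, although their amplitudes decay as $p^i$, their temporal slopes scale as $(pl)^i$, so convergence of the Lipschitz bound requires $pl < 1$. I would therefore need to state this as an explicit hypothesis on $\mathcal{H}$ (satisfied by $p=0.5$ and the lacunarity used in the experiments) and verify the geometric series converges; this is the one place where a careless choice of parameters would break the argument, and it is worth flagging rather than burying in a routine estimate.
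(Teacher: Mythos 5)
Your proposal follows essentially the same route as the paper's proof: bound $\|\Delta u(t)/\Delta t\|_2$ octave-by-octave using the Lipschitz constant of the base Perlin layer, obtain a factor proportional to $S_b \cdot S_t \cdot \sum_{k=0}^{oct-1}(pl)^k$ (the paper writes this as $G_{\max}\, S_b\, S_t\, S(r,oct)$ with $r=pl$), and conclude that since amplitude $S_b$ and temporal scale $S_t$ are independent parameters, the physiological bound $M$ can be met at any desired amplitude. One minor correction: because the octave count is finite, the sum $\sum_{k=0}^{oct-1}(pl)^k$ is finite regardless of whether $pl<1$, so the convergence hypothesis you flag as the main obstacle is not actually needed to close the argument (the paper's $S(r,n)$ explicitly covers $r=1$); $pl<1$ only controls the size of the resulting constant.
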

\begin{proof}


Same as Eq.~\ref{eq:noise_temporal_contraint} in Proposition 1, to satisfy Property 1, we require our skeleton-based Perlin noise function $u(t)$ to satisfy:
\begin{equation}
    \|\Delta u(t) /\Delta t \|_2 \leq M
\end{equation}
Specifically, our $u(t)$ is defined as:
\begin{equation}
\begin{aligned}
    u(t) = \sum_{k=0}^{oct-1} A_k \cdot u_b(f_k \cdot t)
\end{aligned}
\label{eq:perlin}
\end{equation}
where $A_k = S_b \cdot p^k$ is the amplitude of octave $k$, $f_k = S_t \cdot l^k$ is the frequency of octave $k$, $u_b(\cdot)$ is the base Perlin noise function.
Therefore, we have:
\begin{equation}
\frac{\Delta u}{\Delta t} = \sum_{k=0}^{oct-1} A_k \cdot f_k \cdot \frac{\Delta u_b(f_k \cdot t)}{\Delta t}
\end{equation}
Since the base Perlin noise $u_b$ has a bounded gradient magnitude:
\begin{equation}
\left\| \frac{\Delta u_b}{\Delta t} \right\| \leq G_{\max} \quad \forall x \in \mathbb{R}
\end{equation}
where $G_{\max}$ is the maximum gradient magnitude (theoretical value $\approx 1.0$), we have:
\begin{equation}
\begin{aligned}
\left\| \frac{\Delta u}{\Delta t} \right\| 
&\leq \sum_{k=0}^{oct-1} |A_k| \cdot |f_k| \cdot G_{\max} 
\end{aligned}
\end{equation}
Substitute $A_k$ and $f_k$ expressions, we have:
\begin{equation}
\begin{aligned}
\left\| \frac{\Delta u}{\Delta t} \right\| &\leq G_{\max} \sum_{k=0}^{oct-1} (S_b \cdot p^k) \cdot (S_t \cdot l^k) \\
&= G_{\max} \cdot S_b \cdot S_t \sum_{k=0}^{oct-1} (p \cdot l)^k 
\end{aligned}
\end{equation}
Let $r = p \cdot l$, the sum is a geometric series:
\begin{equation}
\sum_{k=0}^{oct-1} r^k = \begin{cases} 
\frac{1 - r^{oct}}{1 - r} & \text{if } r \neq 1 \\
oct & \text{if } r = 1 
\end{cases}
\end{equation}
Thus, we have:
\begin{equation}
\begin{aligned}
\left\| \frac{\Delta u}{\Delta t} \right\| \leq G_{\max} \cdot S_b \cdot S_t \cdot S(r, oct)\\
\text{where}~S(r, n) = \begin{cases} 
\frac{1 - r^n}{1 - r} & r \neq 1 \\
n & r = 1    
\end{cases}
\end{aligned}
\end{equation}
As a result, Property 1 is guaranteed to be satisfied if:
\begin{equation}
    G_{\max} \cdot S_b \cdot S_t \cdot S(r, oct) \leq M
\end{equation}
Since the condition is not solely determined by the noise amplitude $S_b$, our skeleton-based Perlin noise can easily satisfy Property 1 by selecting a sufficiently large $S_b$ and a compensating small $S_t$ and/or $S(r, oct)$.

\end{proof}

\begin{proposition}
Our skeleton-based Perlin noise (sk-Perlin) satisfies Property 2 (Joint Correlation) with a sufficiently large noise amplitude to enable effective regularization.
\end{proposition}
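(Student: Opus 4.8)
The plan is to mirror the structure of Proposition~4 (the Temporal Smoothness result for \texttt{sk-Perlin}) but carried out over the joint axis instead of the time axis, exploiting the fact that Property~2 was reduced in Proposition~2 to the requirement that a certain linear combination of the noise at a parent joint and at a child joint, namely $\mathrm{L}_\phi \cdot u_{\text{parent}}(t) + u_{\text{child}}(t)$, be bounded by the data-dependent constant $\mathrm{L}_\phi \cdot R_{\text{parent}}(t) - \phi_{\min}(R_{\text{parent}}(t))$ (Eq.~\ref{eq:prop2_resulting_relationship}), with the analogous inequality for the upper bound. So the first step is to recall that reduction and note that it suffices to control $\lVert \mathrm{L}_\phi \cdot u_{\text{parent}}(t) + u_{\text{child}}(t) \rVert$ uniformly, and moreover that within a kinematic chain a parent and child joint differ by one grid step along the joint axis, so the quantity we must bound is essentially $\mathrm{L}_\phi$ times a value of $u$ plus its neighbour along the joint dimension.

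Next I would write out the two-stage construction of \texttt{sk-Perlin} along the joint axis explicitly: a coarse base noise $u_b^{\mathrm{chain}}$ that is constant (or ultra-smooth) across all joints within a single chain $JC$, plus a single-octave fine offset $\delta_j$ per joint obtained from the space scale $S_s$, so that $u_{\text{parent}}(t) = u_b^{\mathrm{chain}}(t) + \delta_{\text{parent}}(t)$ and $u_{\text{child}}(t) = u_b^{\mathrm{chain}}(t) + \delta_{\text{child}}(t)$ with the \emph{same} $u_b^{\mathrm{chain}}$ because parent and child lie in the same chain. Then
\begin{equation}
\mathrm{L}_\phi \cdot u_{\text{parent}}(t) + u_{\text{child}}(t) = (\mathrm{L}_\phi + 1)\, u_b^{\mathrm{chain}}(t) + \mathrm{L}_\phi\, \delta_{\text{parent}}(t) + \delta_{\text{child}}(t),
\end{equation}
and since the base Perlin field is bounded by its amplitude $S_b$ and each single-octave offset is bounded by a quantity proportional to the offset amplitude (controlled by $S_s$ together with the persistence/octave factor), the whole expression is bounded by $(\mathrm{L}_\phi+1)S_b + (\mathrm{L}_\phi+1)\,C\,S_b$ for an interpolation constant $C$ — crucially a bound that scales with the amplitude but in which the joint-axis smoothness enters through $S_s$, which is a \emph{separate} parameter. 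So, exactly as in Proposition~4, Property~2 is guaranteed whenever this bound is no larger than the data-side constant, and because amplitude ($S_b$) and joint-axis roughness ($S_s$, octave count) are decoupled, one can keep $S_b$ large for effective regularization while shrinking $S_s$ and/or the offset octave contribution so the inequality holds.

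The main obstacle I anticipate is not the Perlin-side bookkeeping but making the comparison against the data-dependent right-hand side rigorous: the constant $\mathrm{L}_\phi \cdot R_{\text{parent}}(t) - \phi_{\min}(R_{\text{parent}}(t))$ from Eq.~\ref{eq:prop2_resulting_relationship} varies with $t$ and could in principle be small or even degenerate for some poses, so I would either (i) invoke the hypothesis that ground-truth motion satisfies Property~2 with some slack, i.e. there is a uniform $\eta>0$ with $\phi_{\min}(R_{\text{parent}}(t)) + \eta \le R_{\text{child}}(t) \le \phi_{\max}(R_{\text{parent}}(t)) - \eta$, and then demand $(\mathrm{L}_\phi+1)(1+C)S_b \le \eta/(1+\epsilon\mathrm{L}_\phi)$ or a similar explicit threshold; or (ii) phrase the conclusion as: for any $\epsilon$ there exists a choice of $(S_b, S_s, p, oct)$ with $S_b$ as large as the available slack permits such that Property~2 holds. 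A secondary subtlety is justifying the Lipschitz step linking $R'_{\text{child}}$ to the admissible set of $R'_{\text{parent}}$ — that is inherited verbatim from the proof of Proposition~2, so I would cite it rather than redo it. With those two points handled, the remainder is the same geometric-series / bounded-gradient estimate already used for Proposition~4, now applied along the joint axis.
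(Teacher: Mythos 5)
Your proposal matches the paper's proof in essence: both reduce Property~2 to bounding $\mathrm{L}_\phi\cdot u_{\text{parent}}(t)+u_{\text{child}}(t)$ via a global amplitude bound on the shared chain-level noise (the paper writes $\|u_{\text{parent}}\|_2\leq B$ with $B\propto S_b$) plus a bound $C$ on the inter-joint variation controlled by the space scale $S_s$, arriving at the same $(\mathrm{L}_\phi+1)B+C$-type estimate and the same conclusion that amplitude and joint-axis smoothness are governed by separate parameters. Your additional care about the data-dependent right-hand side (introducing a uniform slack $\eta$) addresses a point the paper glosses over, but it does not change the route.
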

\begin{proof}
Recalling Eq. 15, $\mathrm{L}_\phi\cdot u_{\text{parent}}(t) + u_{\text{child}}(t)$ should be bounded by a constant. For our Perlin noise, we have
\begin{equation}
\begin{aligned}
&\|u_{parent}\|_2 \leq B, \\
&\|u_{child} - u_{parent}\|_2 \leq C;
\end{aligned}
\end{equation}
where $B$ is the global noise amplitude upper bound of perlin noise (Eq. 18):
\begin{equation}
    B = S_b \times \left( \sum_{i=0}^{oct-1} p^i \right) \times \sqrt{d},
\end{equation}
$ d = 6 $ is the rotation dimension in $ \mathbb{R}^6 $, and $ C $ is the upper bound on inter-joint noise variation:
\begin{equation}
    C = K_g \times \| d \|_2 \times \frac{g}{S_s \cdot f},
\end{equation}
where $ K_g \approx 2.5 $ is the Perlin gradient constant, $ \| d \|_2 $ is the inter-joint anatomical distance (e.g., shoulder-elbow distance), $ f $ is the spatial frequency (Hz per unit distance), and $ g $ is the gradient weight as described in the Method section of the main paper. The biomechanical implication is that $ \frac{C}{\|d\|_2} $ represents the noise difference per unit distance. 

For our upper bound, 
\begin{equation}
\|\mathrm{L}_\phi\cdot u_{\text{parent}}(t) + u_{\text{child}}(t)\|_2 \leq (\mathrm{L}_\phi + 1)B + C
\end{equation}

Since B and C is defined by the parameters of perlin noise, Thus, since $ \mathrm{L}_\phi + 1 $ is a constant, to keep the cumulative error sufficiently small, even with a larger noise intensity $ S_b $, $ B $ and $ C $ can be reduced by decreasing $ oct $ or increasing $ S_s $. Consequently, Property 2 can be satisfied while using a larger $ S_b $, completing the proof.
\end{proof}

\begin{table*}[ht]
    \centering
    \small
    \begin{tabular*}{\textwidth}{lcccccccc}
        \toprule
         & \multicolumn{4}{c}{\textbf{TotalCapture}} & \multicolumn{4}{c}{\textbf{ANDY}} \\
        \cmidrule(lr){2-5} \cmidrule(lr){6-9}
        \textbf{Method} & \textbf{SIP Err} & \textbf{Angular Err} & \textbf{Joint Err} & \textbf{Mesh Err} & \textbf{SIP Err} & \textbf{Angular Err} & \textbf{Joint Err} & \textbf{Mesh Err} \\
        \midrule
        \textit{TransPose} & 14.28±6.90 & 11.36±4.83 & 5.31±3.02 & 5.89±3.29 & 31.91±18.36 & 39.55±27.83 & 13.77±10.54 & 18.96±14.40 \\
        \textit{TP+Ours} & 12.49±7.41 & 9.43±4.63 & 5.00±3.06 & 5.55±3.36 & 31.20±18.89 & 37.54±26.45 & 13.42±10.50 & 18.46±14.26 \\
        \midrule
        \textit{PIP} & 11.16±5.50 & 10.71±4.64 & 4.55±2.58 & 5.26±2.93 & 29.59±19.22 & 36.82±27.66 & 13.56±11.14 & 18.79±15.47 \\
        \textit{PIP+Ours} & 10.54±5.33 & 8.97±4.21 & 4.38±2.59 & 5.07±2.95 & 29.04±18.81 & 34.16±25.08 & 13.49±11.08 & 18.67±15.35 \\
        \midrule
        \textit{GlobalPose} & 9.85±4.60 & 9.55±4.13 & 3.96±1.94 & 4.35±2.17 & 39.58±25.19 & 46.32±31.76 & 17.66±14.03 & 23.45±18.64 \\
        \textit{GP+Ours} & 7.84±4.50 & 7.87±3.40 & 3.26±1.80 & 3.75±2.00 & 36.71±23.43 & 40.68±28.60 & 16.92±13.42 & 22.22±17.41 \\
        \midrule
        & \multicolumn{4}{c}{\textbf{CIP}} & \multicolumn{4}{c}{\textbf{DIP-IMU}} \\
        \cmidrule(lr){2-5} \cmidrule(lr){6-9}
        \textbf{Method} & \textbf{SIP Err} & \textbf{Angular Err} & \textbf{Joint Err} & \textbf{Mesh Err} & \textbf{SIP Err} & \textbf{Angular Err} & \textbf{Joint Err} & \textbf{Mesh Err} \\
        \midrule
        \textit{TransPose} & 28.46±17.45 & 20.09±11.66 & 10.82±7.09 & 11.91±7.54 & 14.04±6.67 & 8.39±4.29 & 4.86±2.60 & 5.80±3.03 \\
        \textit{TP+Ours} & 26.19±17.30 & 19.02±11.34 & 10.65±7.12 & 11.71±7.56 & 13.57±6.56 & 8.06±4.04 & 4.64±2.44 & 5.50±2.81 \\
        \midrule
        \textit{PIP} & 25.57±13.37 & 19.98±10.76 & 9.02±5.61 & 10.76±6.54 & 12.08±5.77 & 8.18±4.29 & 4.33±2.36 & 5.06±2.70 \\
        \textit{PIP+Ours} & 23.87±11.93 & 19.04±10.10 & 8.60±5.54 & 10.27±6.48 & 11.62±5.46 & 8.14±4.14 & 4.18±2.25 & 4.88±2.57 \\
        \midrule
        \textit{GlobalPose} & 23.04±11.17 & 15.32±9.08 & 6.98±4.18 & 8.13±4.72 & 13.77±6.40 & 8.09±4.11 & 4.36±2.55 & 5.08±2.93 \\
        \textit{GP+Ours} & 22.32±10.73 & 14.36±8.74 & 6.47±4.12 & 7.70±4.65 & 13.50±6.31 & 8.14±4.02 & 4.27±2.44 & 4.98±2.83 \\
        \bottomrule
    \end{tabular*}
    \normalsize
    \setlength{\tabcolsep}{6pt}
    \caption{Quantitative comparisons between baseline methods and their enhanced versions incorporating our motion label smoothing method, with additional Angular Error metric and standard deviations.}
    \label{tab:comparison}
\end{table*}

\begin{table*}[t]
    \centering
    \begin{tabular}{lcccclcccc}
        \toprule
        & \multicolumn{4}{c}{\textbf{Base Scale $S_b$}} & & \multicolumn{4}{c}{\textbf{Time Scale $S_t$}} \\
        \cmidrule(lr){2-5} \cmidrule(lr){7-10}
        \textbf{Value} & \textbf{SIP Err} & \textbf{Angular Err} & \textbf{Joint Err} & \textbf{Mesh Err} & \textbf{Value} & \textbf{SIP Err} & \textbf{Angular Err} & \textbf{Joint Err} & \textbf{Mesh Err} \\
        \midrule
        0.03 & 8.38±4.51 & 8.20±3.34 & 3.62±1.98 & 4.24±2.22 & 0.1 & 8.22±4.40 & 8.04±3.27 & 3.56±1.94 & 4.12±2.16 \\
        0.05 & 8.24±4.44 & 8.07±3.30 & 3.56±1.94 & \textbf{4.13±2.16} & 0.3 & \textbf{8.21±4.38} & 8.01±3.26 & \textbf{3.54±1.93} & 4.12±2.16\\
        \underline{0.07} & 8.25±4.46 & \textbf{8.03±3.30} & \textbf{3.57±1.96} & 4.14±2.19 & \underline{0.5} & 8.26±4.41 & \textbf{8.00±3.28} & 	\textbf{3.54±1.93} & \textbf{4.11±2.15} \\
        0.09 & \textbf{8.23±4.38} & 8.05±3.30 & 3.57±1.96 & 4.16±2.19 & 0.7 & 8.24±4.42 & 8.04±3.29 & 3.55±1.95 & 4.12±2.17\\
        0.11 & 8.32±4.44 & 8.13±3.31 & 3.59±1.94 & 4.18±2.17 & 0.9 & 8.32±4.40 & 8.03±3.29 & 3.57±1.96 & 4.14±2.18\\
        \midrule
        & \multicolumn{4}{c}{\textbf{Space Scale $S_s$}} & & \multicolumn{4}{c}{\textbf{Octaves $oct$}} \\
        \cmidrule(lr){2-5} \cmidrule(lr){7-10}
        \textbf{Value} & \textbf{SIP Err} & \textbf{Angular Err} & \textbf{Joint Err} & \textbf{Mesh Err} & \textbf{Value} & \textbf{SIP Err} & \textbf{Angular Err} & \textbf{Joint Err} & \textbf{Mesh Err} \\
        \midrule
        0.1 & 8.31±4.45 & 8.09±3.34 & 3.58±1.97 & 4.17±2.21 & 2 & 8.41±4.46 & 8.22±3.35 & 3.62±1.97 & 4.20±2.19 \\
        0.3 & 8.38±4.46 & 8.11±3.32 & 3.58±1.96 & 4.16±2.18& 3 & 8.21±4.45 & 8.00±3.31 & 3.55±1.95 & 4.12±2.18 \\
        0.5 & \textbf{8.21±4.42} & 8.09±3.31 & 3.58±1.95 & 4.16±2.19 & 4 & \textbf{8.17±4.43} & 7.99±3.31 & 3.55±1.96 & 4.11±2.18 \\
        \underline{0.7} & 8.30±4.44 & \textbf{8.01±3.28} & \textbf{3.56±1.94} & 4.15±2.18 & \underline{5} & 8.18±4.36 & \textbf{7.95±3.27} & \textbf{3.50±1.92} & \textbf{4.07±2.14} \\
        0.9 & 8.32±4.45 & 8.02±3.28 & 3.56±1.95 & \textbf{4.14±2.18}
         & 6 & 8.26±4.39 & 8.04±3.27 & 3.54±1.94 & 4.11±2.16 \\
        \midrule
        & \multicolumn{4}{c}{\textbf{Persistence $p$}} & & \multicolumn{4}{c}{\textbf{Lacunarity $l$}} \\
        \cmidrule(lr){2-5} \cmidrule(lr){7-10}
        \textbf{Value} & \textbf{SIP Err} & \textbf{Angular Err} & \textbf{Joint Err} & \textbf{Mesh Err} & \textbf{Value} & \textbf{SIP Err} & \textbf{Angular Err} & \textbf{Joint Err} & \textbf{Mesh Err} \\
        \midrule
        0.4 & 8.31±4.45 & 8.09±3.28 & 3.58±1.95 & 4.16±2.17& 1.0 & 8.15±4.39 & 8.05±3.27 & \textbf{3.53±1.94} & \textbf{4.09±2.16} \\
        \underline{0.5} & 8.27±4.44 & \textbf{7.98±3.30} & \textbf{3.56±1.95} & \textbf{4.15±2.17} & \underline{1.5} & \textbf{8.13±4.38} & 8.00±3.31 & \textbf{3.53±1.95} & \textbf{4.09±2.17} \\
        0.6 & \textbf{8.25±4.45} & 8.01±3.26 & 3.57±1.96 & 4.15±2.18 & 2.0 & 8.23±4.41 & 8.08±3.32 & 3.56±1.95 & 4.15±2.18 \\
        0.7 & 8.31±4.43 & 8.09±3.30 & 3.59±1.94 & 4.16±2.17 & 2.5 & 8.25±4.46 & 8.09±3.30 & 3.57±1.96 & 4.15±2.19 \\
        0.8 & 8.39±4.45 & 8.16±3.31 & 3.62±1.97 & 4.22±2.19& 3.0 & 8.18±4.39 & \textbf{7.95±3.28} & 3.54±1.94 & 4.10±2.16 \\
        \bottomrule
    \end{tabular}
    \caption{Quantitative results on hyperparameter selection using GlobalPose+Ours and the TotalCapture~\cite{trumble2017total} dataset. When varying a specific hyperparameter, all other hyperparameters are fixed at their respective optimal values. \underline{Underline}: the optimal choice. \textbf{Bold}: best results.}
    \label{tab:hyper}
\end{table*}

\begin{proposition}
    Our $\text{sk-Perlin}$ noise satisfies Property 3 (Low-frequency Dominance).
\end{proposition}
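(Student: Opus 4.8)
The plan is to compute the power spectral density (PSD) of our $\texttt{sk-Perlin}$ noise directly from its octave decomposition (Eq.~\ref{eq:perlin}) and to lower-bound the low-frequency energy ratio of Property~\ref{property3} by a closed-form expression that, for our parameter choices ($p=0.5$, $oct=5$), exceeds $\alpha=0.7$. First I would pin down the spectral shape of the base Perlin noise $u_b$: because $u_b$ is built on a unit-spacing gradient lattice interpolated with the smooth ($C^2$) quintic fade, its autocorrelation is essentially short-ranged and its PSD $P_{u_b}(f)$ is concentrated below the lattice frequency $f_b^{*}\approx 1$, with the residual tail above $f_b^{*}$ decaying rapidly (at least polynomially, since $u_b$ is twice differentiable). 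I would encode this as: a fraction $\ge \beta$, with $\beta$ close to $1$, of the variance $\sigma_b^{2}=\int P_{u_b}(f)\,df$ lies in $[0,f_b^{*}]$. This single invocation of the known spectral behaviour of gradient/Perlin noise, rather than an elementary derivation, is the step I expect to be the main obstacle.

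Second, I would propagate this bound through the octave sum. Treating the octaves as mutually uncorrelated (they occupy geometrically separated frequency bands, and for non-integer $l$ the lattice evaluations at distinct scales are misaligned), the PSD is additive, $P_u(f)=\sum_{k=0}^{oct-1}A_k^{2}\,\tfrac{1}{f_k}P_{u_b}(f/f_k)$ with $A_k=S_b p^{k}$ and $f_k=S_t l^{k}$; time rescaling preserves total power, so octave $k$ carries variance $A_k^{2}\sigma_b^{2}=S_b^{2}\sigma_b^{2}p^{2k}$, of which a $\ge\beta$ fraction sits in $[0,f_k f_b^{*}]=[0,S_t l^{k} f_b^{*}]$. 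The total variance is $S_b^{2}\sigma_b^{2}\sum_{k=0}^{oct-1}p^{2k}$. With our $\mathcal{H}$ the bands of the lowest octaves lie inside $[0,f_c]$ (for us $f_c=5$\,Hz; with $S_t=0.5$ and $l=1.5$ this holds for $k=0$ and $k=1$), so letting $k^{*}$ be the index of the last such octave, the energy below $f_c$ is at least $\beta S_b^{2}\sigma_b^{2}\sum_{k=0}^{k^{*}}p^{2k}$, giving
\begin{equation}
\alpha \;\ge\; \beta\cdot\frac{\sum_{k=0}^{k^{*}}p^{2k}}{\sum_{k=0}^{oct-1}p^{2k}} \;=\; \beta\cdot\frac{1-p^{2(k^{*}+1)}}{1-p^{2\,oct}} .
\end{equation}

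Finally I would substitute $p=0.5$, $oct=5$, $k^{*}=1$: the fraction equals $\tfrac{1-2^{-4}}{1-2^{-10}}\approx 0.938$, so any modest band-limiting $\beta\gtrsim 0.75$ already yields $\alpha\ge 0.7$, consistent with the empirical value and the PSD plot in Fig.~\ref{fig:noise}(b). The multi-channel and multi-joint form of Property~\ref{property3} follows at once because every channel and joint is produced by the same construction with the same $\mathcal{H}$, so the per-channel bound carries over to the summed spectra verbatim. I would close by contrasting this with the white-noise case, where the constant PSD forces the analogous ratio to $f_c/f_{\max}=2f_c\Delta t\ll 1$, making explicit that it is precisely the small octave count $oct$, the geometric amplitude decay $p^{k}$, and the modest frequency growth $l^{k}$ that deliver low-frequency dominance.
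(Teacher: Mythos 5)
Your proof is correct in its essentials but takes a genuinely different, and more quantitative, route than the paper. The paper's own proof is a one-line assertion: it states that Perlin noise is a pink/$1/f^{\beta}$ noise with $P_c^{u}(f)\propto 1/f^{2}$ and declares Property~3 ``obviously'' satisfied, without connecting that spectral law to the specific thresholds $f_c=5\,$Hz and $\alpha=0.7$ (and, taken literally, $\int_0^{f_c}f^{-2}\,df$ diverges, so the asserted law cannot hold down to $f=0$ and the normalization is left implicit). You instead work directly from the octave decomposition: assuming the base noise $u_b$ is band-limited up to a fraction $\beta$ of its variance, that octaves are spectrally uncorrelated so PSDs add, and that time rescaling preserves variance, you obtain the closed-form lower bound $\alpha\ge\beta\,(1-p^{2(k^{*}+1)})/(1-p^{2\,oct})$ and verify numerically that $p=0.5$, $oct=5$, $k^{*}=1$ gives $\approx 0.938\beta\ge 0.7$ for $\beta\gtrsim 0.75$. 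This buys an explicit account of \emph{which} parameters ($p$, $oct$, $l$, $S_t$) deliver low-frequency dominance, which the paper's argument does not provide; the cost is that you must invoke two unproved facts --- the band-limitedness of base gradient noise (which you honestly flag as the main obstacle) and octave uncorrelatedness --- and your identification of $k^{*}$ depends on how grid units map to physical Hz at the 60\,fps sampling rate, though your conservative choice $k^{*}=1$ keeps the bound safe. These assumptions are no stronger than the paper's own unproved $1/f^{2}$ claim, so your argument stands as a legitimate and arguably tighter alternative.
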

\begin{proof}
The power spectrum density (PSD) $P_c$ of a pink noise such as our skeleton-based Perlin noise $u$ (Eq.~\ref{eq:perlin}) follows a $1/f^\beta$ distribution ($\beta \approx 2$):
\begin{equation}
    P_c^{u}(f) \propto \frac{1}{f^{2}},
\end{equation}
where $f$ denotes the frequency of the signal. Obviously, our skeleton-based Perlin noise satisfies Property 3.



\end{proof}

\section{Implementation Details}
\subsection{System Hardware}
Our task involves real-time human pose estimation using six IMUs. In the live demonstration, we employ six Noitom PN Lab series sensors~\cite{noitom2017perception} operating at a frame rate of 60 fps. Training is conducted on an Nvidia RTX 4090 graphics card, while the live demo runs in real-time on a laptop equipped with an Intel(R) Core(TM) i7-12700H CPU, without GPU acceleration.

\subsection{Consent}
Live demo recordings and the use of images in our demo video and teaser figure were conducted with explicit written consent from all participants.

\section{Additional Experiments}

\subsubsection{Additional Metrics of Table 1.}
We show a more comprehensive set of metrics of Table 1 in the main paper in Table~\ref{tab:comparison}, including additional standard deviations and angular errors.

\subsubsection{Hyperparameter Selection.}
We validate the hyperparameter selection for our skeleton-based Perlin noise by comparing the quantitative results for the choices of each parameter, including base scale $S_b$, time scale $S_t$, space scale (joint scale) $S_s$, persistence $p$, octaves $oct$ and lacunarity $l$, as presented in Table~\ref{tab:hyper}. Accordingly, we select the parameter set yielding the best results as:
\begin{itemize}
    \item $ S_b = 0.07 $ determines the overall noise magnitude, which is sufficiently large for effective regularization while satisfying all properties.
    \item $ S_t = 0.5$ stretches the time axis to enhance temporal smoothness.
    \item $ S_s = 0.7 $ controls the noise variation between different dimensions.
    \item $ p = 0.5 $ attenuates high-frequency effects, reinforcing the dominance of low-frequency noise. 
    \item $ oct = 5 $ prevents the noise from becoming overly complex by reducing the number of high-frequency layers.
    \item $l = 1.5$ controls the frequency doubling coefficient for each octave layer.
\end{itemize}

\subsubsection{Byproduct of Global Translation Improvement.}

\begin{figure}[t]
	\centering
 \includegraphics[width=0.8\linewidth]{./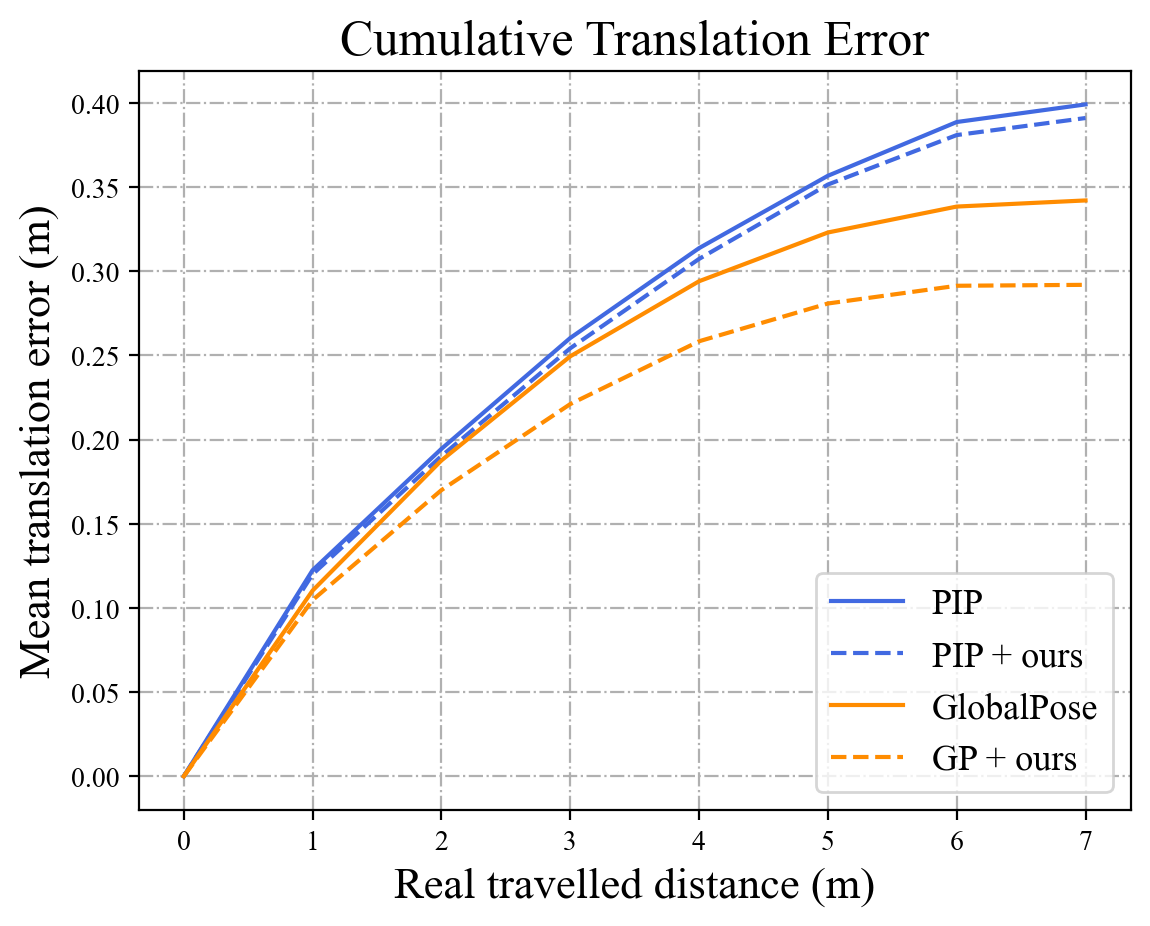}
	\caption{Comparison of translation drifting error on the TotalCapture dataset. We plot the global position error accumulation curve with respect to the ground truth traveled distance. A lower curve indicates smaller drift.
}
	\label{fig:trans}
\end{figure}

 Additionally, we observe that our method improves global translation as a byproduct by applying it to baseline methods based on physical optimization~\cite{yi2022physical}. In these algorithms, global translation is computed as an optimization problem, derived from information including local pose, joint velocities, and foot-ground contact probabilities. Since our method improves pose estimation, it also indirectly enhances the accuracy of global translation predictions (Fig.~\ref{fig:trans}). 
 
\subsubsection{Training from Scratch (Re-training).} To further validate the effectiveness of our motion label smoothing, in addition to fine-tuning the pose estimation network with smoothed labels, we also assess the impact of training from scratch (re-training). We adhere strictly to the training protocols of the baseline method, including utilizing the AMASS~\cite{mahmood2019amass} dataset and the DIP-IMU~\cite{huang2018deep} training split as the training set, with IMU data from the AMASS dataset synthesized using the data generation method proposed by \cite{yi2024pnp}.
As shown in Table~\ref{tab:retrain}, training the same network using our smoothed motional labels significantly outperforms those using the original motion labels, further demonstrating the superiority of our approach.


\section{Additional Discussion of Other Alternatives}
\subsubsection{Naive Temporal Smoothing.}
By directly appling Gaussian smoothing to labels, practically, is to perform one-dimensional Gaussian smoothing along the time axis for the R6D data of each joint. Intuitively, this naive approach smooths fluctuations along the time axis, reducing the uncertainty in the distribution. Therefore, this approach does not align with the proposed entropy-enhancement interpretation of label smoothing:

\begin{proposition}
    Gaussian smoothing is not an entropy-enhancement method.
\end{proposition}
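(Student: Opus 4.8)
The plan is to show that convolving the motion labels with a Gaussian kernel along the time axis can only \emph{decrease} (or leave unchanged) a suitable entropy functional of the label signal, hence it cannot be an entropy-enhancement method in the sense required by our interpretation of label smoothing. The natural tool here is the entropy-power inequality together with the fact that Gaussian smoothing is a linear low-pass filter: if $R_c(t)$ denotes a channel of the label signal and $G_\sigma$ the Gaussian kernel, then $\tilde R_c = G_\sigma * R_c$ has a power spectral density $\tilde P_c(f) = |\hat G_\sigma(f)|^2 P_c(f) \le P_c(f)$ pointwise, since $|\hat G_\sigma(f)| = e^{-2\pi^2\sigma^2 f^2}\le 1$. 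Thus smoothing strictly attenuates every nonzero-frequency component and injects no new spectral content.

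First I would make precise the entropy notion being used. Because label smoothing in the classification case increases the Shannon entropy of the target \emph{distribution}, the analogue here is the differential entropy of the empirical distribution of label values (or, equivalently, the spectral/variance content that the network must fit). I would argue that Gaussian smoothing acts as a contraction: it is a convex combination (convolution with a probability kernel) of shifted copies of the signal, so by Jensen's inequality the pointwise variance of $\tilde R_c$ is no larger than that of $R_c$, and more strongly, the total signal power $\int P_c(f)\,df$ strictly decreases whenever $R_c$ has any high-frequency content. Second, I would connect this to entropy directly: a narrower, lower-variance target distribution has lower differential entropy (for fixed family, entropy is monotone in the scale parameter), so smoothing moves the label distribution toward a \emph{lower}-entropy, more deterministic object — the opposite of what label smoothing is meant to do. Third, I would contrast this explicitly with our skeleton-based Perlin noise, which \emph{adds} an independent stochastic component; by the entropy-power inequality, adding an independent noise term strictly increases differential entropy, whereas filtering cannot.

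The main obstacle I anticipate is making the claim rigorous without over-claiming: "Gaussian smoothing reduces entropy" is only cleanly true for certain entropy functionals and under mild regularity (e.g., the signal must actually contain high-frequency energy, otherwise smoothing is a near-identity and entropy is essentially unchanged). I would therefore phrase the proposition with the qualification that smoothing is non-increasing in entropy, with strict decrease whenever $P_c(f)>0$ on a set of positive measure outside a neighborhood of $f=0$, and lean on the spectral attenuation inequality $\tilde P_c(f)=|\hat G_\sigma(f)|^2 P_c(f)$ as the technical backbone. A secondary subtlety is that the ``entropy'' of a finite-length deterministic signal must be interpreted as the entropy of its value histogram or of the Gaussian surrogate with matched spectrum; I would fix the latter convention (maximum-entropy Gaussian process with PSD $P_c$), for which differential entropy is $\tfrac12\int \log P_c(f)\,df + \text{const}$, and then the inequality $\log\tilde P_c(f)=\log P_c(f) - 4\pi^2\sigma^2 f^2 \le \log P_c(f)$ gives the result by integration. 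This makes the conclusion immediate and avoids any heavy computation.
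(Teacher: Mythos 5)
Your proposal is correct, but it argues in a genuinely different domain than the paper. The paper works with the Shannon/differential entropy of the \emph{value distribution} of the per-frame labels: it models the smoothed distribution as a convolution of the original density with the Gaussian kernel, invokes the bound $H[p'] \le H[p] + H[G]$ and a central-limit-type argument that repeated convolution drives $p'$ toward a single Gaussian, and concludes that a multi-peaked empirical label distribution loses entropy under filtering. You instead work in the frequency domain: you fix the maximum-entropy Gaussian-surrogate convention, under which the entropy (rate) is $\tfrac12\int\log P_c(f)\,df + \text{const}$, and observe that the low-pass attenuation $\tilde P_c(f)=|\hat G_\sigma(f)|^2P_c(f)$ with $|\hat G_\sigma(f)|\le 1$ makes this integral non-increasing, strictly decreasing whenever high-frequency energy is present. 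Your route is arguably the more rigorous of the two: the paper's identification of temporal smoothing of the signal with convolution of the value density is not literally correct (convolving the density with a Gaussian corresponds to \emph{adding} independent Gaussian noise, which by the entropy-power inequality would \emph{increase} entropy -- exactly the contrast you draw with the sk-Perlin construction), and its appeal to multi-peaked distributions having higher entropy than a Gaussian runs against the max-entropy property of the Gaussian at fixed variance. Your spectral argument avoids both issues, connects naturally to Property 3 and Fig.~2(b), and makes explicit when the decrease is strict; its only cost is that the conclusion is relative to a chosen entropy functional (the spectral Gaussian surrogate) rather than the raw empirical histogram the paper nominally discusses. Either way the proposition stands: Gaussian smoothing contracts rather than enhances label entropy.
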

\begin{proof}
Consider the Shannon entropy of the R6D component for each joint in the time series, defined as $H = - \sum_{i} p(x_i) \log p(x_i)$, where $ p(x_i) $ represents the probability density of the label $ x_i $ of frame $i$. After applying the filter, the output distribution $ p'(x) $ is the convolution of the original distribution $ p(x) $ with the Gaussian kernel $ G(x, \sigma) $:
\begin{equation}
    p'(x) = (p * G)(x) = \int p(x - \tau) G(\tau, \sigma) \, d\tau
    \nonumber
\end{equation}
where $ G(\tau, \sigma) = \frac{1}{\sqrt{2\pi\sigma^2}} e^{-\frac{\tau^2}{2\sigma^2}} $. According to information theory, the entropy of the output distribution $ p'(x) $ after convolving a distribution $ p(x) $ with a Gaussian kernel $ G(x, \sigma) $ satisfies 
\begin{equation}
    H[p'] \leq H[p] + H[G]
    \nonumber
\end{equation} 
where $ H[G] $ is the entropy of the Gaussian kernel, dependent on $ \sigma $. For a fixed $ \sigma $, $ H[G] $ remains a constant. More specifically, the convolution operation causes the distribution $ p'(x) $ to approach a Gaussian distribution (due to the central limit theorem's convolution effect), with the entropy of a Gaussian distribution given by $H_{\text{Gaussian}} = \frac{1}{2} \log (2\pi e \sigma^2)$. If the initial $ p(x) $ contains multiple peaks or irregular patterns, its entropy exceeds that of a single Gaussian distribution. Since the original label is highly unlikely to be a perfect Gaussian distribution, the entropy decreases after filtering.
\end{proof}

\begin{table}[ht]
    \centering
    \setlength{\tabcolsep}{4pt}
    \small
    \begin{tabular}{lcccc}
        \toprule
        \textbf{Method} & \textbf{SIP Err} & \textbf{Ang Err} & \textbf{Joint Err} & \textbf{Mesh Err}\\
        \midrule
        \textit{TransPose} & 14.22±6.60 & 11.50±5.00 & 5.39±2.81 & 5.98±3.08\\ 
        \textit{TP + Ours} & 13.74±6.77 & 11.44±4.08 & 5.38±2.18 & 5.96±3.08\\
        \midrule
        \textit{PIP} & 12.11±6.84 & 11.53±5.41 & 4.97±2.96 & 5.72±3.39\\ 
        \textit{PIP + Ours} & 11.77±6.82 & 11.40±5.39 & 4.81±2.96 & 5.58±3.40\\
        \midrule
        \textit{GlobalPose} & 10.49±4.73 & 9.91±3.90 & 4.27±2.14 & 4.80±2.38\\ 
        \textit{GP + Ours} & 9.08±4.56 & 8.60±3.47 & 3.95±2.06 & 4.55±2.30\\
        \toprule
    \end{tabular}
    \caption{
    Quantitative results on retraining (training-from-scratch) pose estimation models with the proposed motion label smoothing. Evaluation is done using the TotalCapture dataset.}
    \label{tab:retrain}
\end{table}

\subsubsection{Knowledge Distillation.}
As outlined in the Experiments section of the main paper, we tested another alternative which uses poses optimized through a physics optimization scheme~\cite{yi2022physical, yi2025improving}, integrated with ground truth labels, as a form of label smoothing regularization (LSR)~\cite{yuan2020revisiting}. 
We show how this approach is suboptimal for our task as follows:

\begin{itemize}
    \item {\it Violation of Motion Properties}: Since the poses are optimized with a learning-based approach, it cannot guarantee adherence to the motion properties imposed by biomechanical properties (Properties 1-3).
    \item {\it Student Model Trade-off}: Although student models provide computational efficiency through lightweight inference, they inherently experience a reduction in accuracy compared to their teacher models. This trade-off compromises our objective of enhancing the performance and robustness of sparse-IMU-based motion capture. 
\end{itemize}

\end{document}